\documentclass[journal]{IEEEtran}
\RequirePackage{filecontents}
\begin{filecontents}{\jobname.bib}
	@IEEEtranBSTCTL{IEEEexample:BSTcontrol,
		CTLdash_repeated_names= "no",
	}
\end{filecontents}
\usepackage{epsfig,amsmath,amssymb,epsf,amsthm,scalefnt,multirow,subfig}
\usepackage{xcolor}
\usepackage{float}
\usepackage{cite}
\usepackage{psfrag}\usepackage{graphics,amsmath,amssymb,graphicx,amsthm,scalefnt,multirow,dsfont,subfig}
\usepackage{algorithm}
\usepackage{algpseudocode}
\usepackage{verbatim}
\usepackage{tabularx,booktabs}
\usepackage{multicol}
\usepackage{stfloats}
\usepackage{bm}
\usepackage{kotex} 
\captionsetup{labelsep=period}
\usepackage{booktabs}
\usepackage{xcolor}
\usepackage{epsfig}
\usepackage{epstopdf}
\usepackage{acronym}
\graphicspath{{fig/}}
\usepackage{mathtools}
\usepackage{amsthm}
\usepackage{bm}
\usepackage{comment}
 \usepackage{subcaption}
 \usepackage{setspace}
  \usepackage{graphicx}
  \usepackage{soul}
  \usepackage[absolute,overlay]{textpos}

\usepackage{ragged2e}
\newtheorem{theorem}{Theorem}
\newtheorem{lemma}{Lemma}
\newtheorem*{lemma*}{Lemma}

\def\cA{{\mathcal{A}}}

\def\b0{{\pmb{0}}}

\def\ba{{\mathbf{a}}} \def\bb{{\mathbf{b}}}

   \def\bp{{\mathbf{p}}}
 \def\br{{\mathbf{r}}} \def\bs{{\mathbf{s}}}



\makeatletter
\def\@IEEEsectpunct{.\ \,}
\def\paragraph{\@startsection{paragraph}{4}{\z@}{1.2ex plus 1.1ex minus 0.5ex}%
{0ex}{\normalfont\normalsize\bfseries}}
\makeatother


\theoremstyle{remark}

\begin{document}

\title{Analysis of Beam Misalignment Effect \\ in Inter-Satellite FSO Links}

\author{\IEEEauthorblockN{Minje Kim, Hongjae Nam, Beomsoo Ko, Hyeongjun Park, Hwanjin Kim, Dong-Hyun Jung, and Junil Choi}
\thanks{M. Kim, B. Ko, H. Park, and J. Choi are with the School of Electrical Engineering, Korea Advanced Institute of Science and Technology, Daejeon 34141, South Korea (e-mail: mjkim97@kaist.ac.kr; kobs0318@kaist.ac.kr; mika0303@kaist.ac.kr; junil@kaist.ac.kr)}
\thanks{H. Nam is with the Elmore Family School of Electrical
and Computer Engineering, Purdue University, West Lafayette, IN 47907 USA
(e-mail: nam86@purdue.edu).}
\thanks{H. Kim is with the School of Electronics Engineering, Kyungpook National
University, Daegu 41566, South Korea (e-mail: hwanjin@knu.ac.kr).}
\thanks{D-H. Jung is with the School of Electronic Engineering, Soongsil
University, Seoul, 06978, South Korea (e-mail: dhjung@ssu.ac.kr).}
}

\maketitle

\begin{abstract}
Free-space optical (FSO) communication has emerged as a promising technology for inter-satellite links (ISLs) due to its high data rate, low power consumption, and reduced interference. However, the performance of inter-satellite FSO systems is highly sensitive to beam misalignment.
While pointing-ahead angle (PAA) compensation is commonly employed, the effectiveness of PAA compensation depends on precise orbital knowledge and advanced alignment hardware, which are not always feasible in practice. To address this challenge, this paper investigates the impact of beam misalignment on inter-satellite FSO communication. We derive a closed-form expression for the cumulative distribution function (CDF) of the FSO channel under the joint jitter and misalignment-induced pointing error, and introduce a truncated CDF formulation with a bisection algorithm to efficiently compute outage probabilities
with guaranteed convergence and minimal computational overhead. To make the analysis more practical, we quantify displacement based on orbital dynamics. Numerical results demonstrate that the proposed model closely matches Monte Carlo simulations, making the proposed model highly useful to design inter-satellite FSO systems in practice.
\end{abstract}

\begin{IEEEkeywords}
Inter-satellite link (ISL), free-space optical (FSO) communication, pointing error, beam misalignment
\end{IEEEkeywords}

\section{Introduction}
Satellite communication systems have gained interest in recent years due to their unique ability to provide global coverage, particularly benefiting remote and underserved regions where terrestrial infrastructure is limited or nonexistent~\cite{cioni2018satellite}. A key enabler of this capability is the deployment of satellite constellations in low Earth orbit (LEO), where a large number of coordinated satellites operate collectively to improve network performance. Such constellations enhance system efficiency by enabling low-latency communication, wide-area coverage, and high-resolution imaging from various orbital altitudes~\cite{giordani2020non}. Effective operation of constellations requires coordination among satellites~\cite{jung2023satellite}, highlighting the importance of dedicated inter-satellite communication technologies.

In this context, the development of inter-satellite links (ISLs) has become essential for enabling direct communication between satellites without dependence on ground stations, and supporting advanced operational concepts such as formation flying~\cite{radhakrishnan2016_ISLintro2, chen2021_ISLintro1}. By forming a space-based mesh network, ISLs facilitate flexible routing, substantially reduce end-to-end latency, and extend robust communication coverage, particularly in remote areas~\cite{kodheli2020satellite}. Such capabilities are critical for LEO constellations, where satellite cooperation through ISLs can significantly enhance overall network efficiency and performance~\cite{kaushal:2016vacuum}. Furthermore, ISLs contribute to increased system robustness by effectively offloading traffic and maintaining continuous, reliable service between the space segment and ground users~\cite{chen2024survey}.

While both radio frequency (RF) and free-space optical (FSO) communication technologies can be considered for ISLs, FSO communication has recently gained significant attention owing to its ability to support higher data rates, lower power consumption, and reduced interference. In particular, the advantages of FSO are prominent in long-distance scenarios, highlighting its suitability for high-capacity ISLs~~\cite{trichili2020roadmap}.
However, the performance of ISLs is highly sensitive to beam pointing error, as FSO links are vulnerable due to narrow beamwidth and high directivity, which can degrade overall link performance. Even small deviations in the line-of-sight (LoS) can lead to significant power loss at the receiver~\cite{chen:1989pointingerror}.

Therefore, accurate analysis of pointing error is essential for evaluating the performance of inter-satellite FSO systems.
The pointing error in FSO systems consists of two main sources, misalignment and jitter, where \textit{misalignment} refers to a static displacement of the beam center, typically caused by orbital uncertainty or limited attitude precision, while \textit{jitter} denotes fast fluctuations around the beam axis, often induced by mechanical vibration or actuator noise.

In inter-satellite FSO links, pointing-ahead angle (PAA) compensation is commonly employed to mitigate misalignment, where a transmit satellite steers its beam toward the predicted future location of the receiving satellite~\cite{calvo2019paa}. The effectiveness of PAA compensation, however, highly depends on the availability of precise orbital knowledge and accurate control, which may not be feasible in practice~\cite{toyoshima:2005trends}. Although high-end commercial satellites, e.g., Starlink, can incorporate dedicated alignment hardware, such systems are very costly for resource-constrained platforms such as CubeSats~\cite{yoon2017pointing ,chaudhry:2021starlink}. Instead, these smaller satellites typically rely on body-pointing via their attitude control systems, which results in slower response and beam misalignment~\cite{cahoy:2019click, long:2018pointing}. Therefore, evaluating the performance of inter-satellite FSO links under imperfect beam alignment is essential, particularly in practical environments where ideal compensation cannot be guaranteed.

Recognizing the limitations of current compensation techniques, we evaluate the inter-satellite FSO communication system under beam misalignment scenarios. 
In this paper, we provide an inter-satellite FSO channel modeling under the misalignment condition and derive its exact cumulative density function (CDF) formulation. 
Since the exact CDF involves an infinite series, a truncation algorithm is introduced to enable efficient and practical computation of outage probabilities. 
The analysis incorporates the influence of satellite spatial configuration on beam misalignment, providing a comprehensive assessment of system performance.

\subsection{Related Work}
Various statistical distributions have been proposed to characterize pointing error in FSO communication systems~\cite{farid:2007,gappmair2011,yang:2014nonzero,alquwaiee2016asymptotic}. 
The Rayleigh distribution captures the case where the pointing error results purely from jitter with zero-mean and equal variance in both directions~\cite{farid:2007}. 
The Hoyt distribution extends the analysis by allowing different jitter variances along the horizontal and vertical axes, while still assuming no misalignment~\cite{gappmair2011}. 
The Rician distribution incorporates the effect of nonzero-mean displacement, modeling misalignment under equal jitter variances~\cite{yang:2014nonzero}. 
The Beckmann distribution provides the most general model, capturing both nonzero mean displacement and unequal jitter variances~\cite{alquwaiee2016asymptotic}.

{In terrestrial FSO systems, these statistical models have been widely applied to analyze the combined effects of beam misalignment, atmospheric turbulence, and mechanical jitter.} 
Several studies have investigated diverse FSO scenarios, including hybrid FSO/RF relaying systems~\cite{nguyen2022rateadaptation,zedini2016dualhop,soleimani2016generalized,lei2018secrecy}, 
aerial platforms such as unmanned aerial vehicles (UAVs) and high altitude platforms (HAPs)~\cite{dabiri2018uav,wang2021hovering,safi2020hap}, 
and multi-aperture MIMO configurations~\cite{bhatnagar2016mimo}. 
However, these previous studies have primarily focused on terrestrial environments, where a combination of beam pointing error and atmospheric turbulence effects leads to analytical intractability, 
necessitating the use of approximation or asymptotic methods in performance analysis.

In contrast, inter-satellite FSO communication operates in vacuum conditions, effectively eliminating turbulence-induced fading, which allows a more isolated and focused characterization of pointing error effects. Several studies have investigated the impact of pointing errors in inter-satellite FSO communication systems~\cite{nelson1994experimental,arnon2005ppointingerror,polishuk2004optimization,liang2023free,tawfik2021performance,song2017impact,zhu2024average}.
The pointing error distribution caused by hardware jitter was experimentally characterized and shown to closely approximate a Gaussian distribution \cite{nelson1994experimental}. Link margin performance of inter-satellite FSO systems under pointing errors was evaluated, analyzing the resulting bit error rate (BER) for given link margins \cite{arnon2005ppointingerror}. Optimal transmit power to achieve target BER was computed by taking jitter-induced pointing errors into account \cite{polishuk2004optimization}. More recent studies examined the required transmitted power for end-to-end satellite communication chains involving multiple ISLs affected by pointing errors \cite{liang2023free}, and the performance of long-distance ISLs between LEO and geostationary Earth orbit (GEO) satellites \cite{tawfik2021performance}. 
{Some studies have modeled pointing errors by integrating the spatial intensity distribution of the optical Gaussian beam with the finite receiver aperture size~\cite{song2017impact, zhu2024average}. 
These physical-layer approaches enable a more detailed characterization of pointing error.}

{While the above studies addressed pointing errors either at the system level through link budget and BER analysis~\cite{nelson1994experimental,arnon2005ppointingerror,polishuk2004optimization,liang2023free,tawfik2021performance}, 
or at the physical layer via received power modeling using beam intensity profiles~\cite{song2017impact,zhu2024average}, 
they focused on stochastic jitter and assumed perfect beam alignment. 
As a result, beam misalignment arising from deterministic factors such as orbital dynamics or limited tracking control has not been explicitly incorporated into existing ISL performance models.}

\subsection{Challenges and Contributions}
\paragraph*{Challenges}
Although pointing errors in inter-satellite FSO communication systems have been studied, several critical challenges related to realistic pointing error modeling and dynamic constellation effects remain insufficiently addressed. The main challenges can be summarized as follows.
    \begin{itemize}

\item \textbf{Limitations in existing pointing error models:} Prior research focused mainly on stochastic jitter effects while assuming perfect compensation of static misalignment, corresponding to ideal tracking conditions~\cite{song2017impact,zhu2024average}. Such assumptions are inadequate for small satellites like CubeSats, where limited attitude control and orbital dynamics induce systematic beam misalignment that should be taken into account. Comprehensive performance evaluation demands joint modeling of both random jitter and misalignment components.

\item \textbf{Simplified satellite geometry and dynamics:} Existing analyses utilized fixed inter-satellite distances derived from constellation configurations and treated links independently without accounting for relative satellite motion or dynamic changes in geometry over time~\cite{arnon2005ppointingerror,polishuk2004optimization}. This simplification disregards dynamic beam misalignment effects and constellation-scale interactions driven by coordinated satellite trajectories. Practical performance assessment requires incorporating time-varying orbital dynamics alongside constellation topology.
\end{itemize}

\paragraph*{Contributions} 
To overcome the challenges, this paper presents novel modeling techniques and performance evaluation algorithms that capture the complexities of inter-satellite FSO links. The key contributions are outlined as follows.
\begin{itemize}
    \item \textbf{Statistical modeling of pointing error in inter-satellite FSO links}: 
Previous studies have considered non-zero misalignment only in terrestrial/aerial FSO systems~\cite{yang:2014nonzero, alquwaiee2016asymptotic,nguyen2022rateadaptation,zedini2016dualhop,soleimani2016generalized,lei2018secrecy,dabiri2018uav,wang2021hovering,safi2020hap,bhatnagar2016mimo}. In this work, we present a unified model of pointing errors that incorporates both jitter and misalignment, and establish an exact closed-form CDF expression for the inter-satellite FSO link channel. To make the analysis computationally tractable, we further introduce a truncation-based algorithm that enables efficient evaluation of outage probabilities. The proposed framework combines analytical rigor with computational practicality and can be directly applied to ISL performance assessment and system design.

    \item \textbf{Analytical quantification of misalignment induced by orbital dynamics:} 
    {We develop an analytical approach to quantify misalignment arising from satellite orbital motion. While terrestrial FSO studies have generally not quantified motion-induced effects, our method incorporates orbital dynamics and the relative orientation of orbital planes into a receiver-centric spherical coordinate framework. A bisection-based algorithm is employed to determine signal arrival time and estimate receiver displacement with high computational efficiency.}
    
    \item \textbf{Constellation-based performance evaluation:}  
    To assess practical relevance, we analyze the inter-satellite FSO performance under realistic Iridium and Starlink constellation scenarios. For each constellation, we compute the ISL distance and corresponding displacement, and evaluate the resulting outage probability. Simulation results verify that our proposed method matches the performance of Monte Carlo simulations while offering significant computational efficiency.
\end{itemize} 

The remainder of this paper is organized as follows. Section~\ref{sec2} describes the inter-satellite FSO system model, including the Gaussian beam and pointing error characterization. Section~\ref{sec3} derives closed-form expressions for the PDF and CDF of the pointing error channel, along with a practical truncation approximation for the CDF. Section~\ref{sec4} presents a numerical method to compute misalignment displacement due to satellite motion, using a receiver-centric frame and a bisection algorithm for signal arrival time. Section~\ref{sec5} evaluates the pointing error induced by joint jitter and beam misalignment via simulations of Iridium and Starlink constellations. Finally, conclusions are given in Section~\ref{sec6}.

\begin{figure}
    \centering
    \includegraphics[width=1\linewidth]{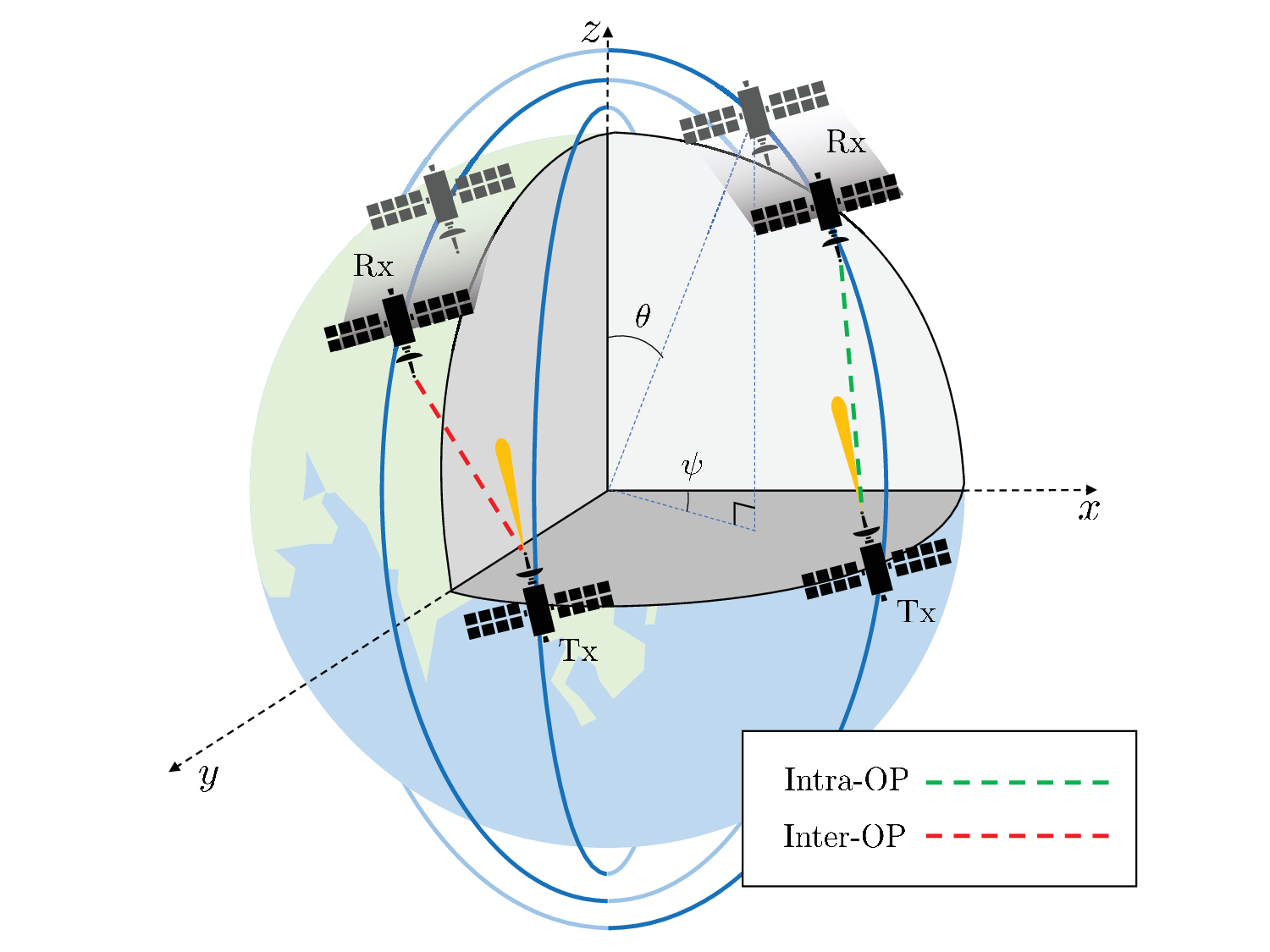}
    \caption{Inter-satellite FSO communication system with transmit satellite and receiver satellite. Both intra-orbital plane (intra-OP) and inter-orbital plane (inter-OP) links are shown in the figure.}
    \label{fig:sys_model}
\end{figure}

\section{System Model} \label{sec2}

We consider an inter-satellite communication system where two satellites are each equipped with an FSO transceiver, as illustrated in Fig.~\ref{fig:sys_model}. {ISLs are typically classified into two types, intra-orbital plane (intra-OP) and inter-orbital plane (inter-OP) links, depending on whether the satellites are in the same or different orbital planes (OPs)}.
{In the system}, the vacuum environment of space eliminates atmospheric turbulence that is a major channel fluctuation factor in terrestrial FSO communication systems~\cite{sharma:2013vacuum,kaushal:2016vacuum}. As a result, the inter-satellite FSO channel is only affected by pointing error, which consists of jitter and beam misalignment.

In FSO systems, the transmitted beam is typically modeled as a Gaussian beam due to its analytical tractability and accuracy over long distances. With the Gaussian beam model,  the received intensity at a distance \( \ell \) from the transmitter is given by~\cite{saleh:2008}
\begin{equation}
    I_{\text{beam}}(\boldsymbol{\rho}; \ell) = \frac{2}{\pi \omega_\ell^{2}} \exp\left(-\frac{2\lVert\boldsymbol{\rho}\rVert^{2}}{\omega_\ell^{2}}\right),
\end{equation}
where \( \boldsymbol{\rho} \) denotes the radial distance from the beam axis, and \( \omega_\ell \) is the beam radius at distance \( \ell \), defined as the point where the intensity falls to \( 1/e^2 \) of its maximum. The graphical explanation of the Gaussian beam model is shown in Fig.~\ref{fig:Gaussian beam}.
The beam profile is fully determined by the beam waist \( \omega_0 \) at \( \ell = 0 \), where the beam radius is minimized. The relationship between \( \omega_0 \) and \( \omega_\ell \) is given by~\cite{alda2003laser}
\begin{equation} \label{eq:beam radius}
    \omega_\ell = \omega_0 \sqrt{1 + \left( \frac{\lambda \ell}{\pi \omega_0^2} \right)^2 },
\end{equation}
where \( \lambda \) denotes the optical wavelength.

\begin{figure}
    \centering
    \includegraphics[width=1\linewidth]{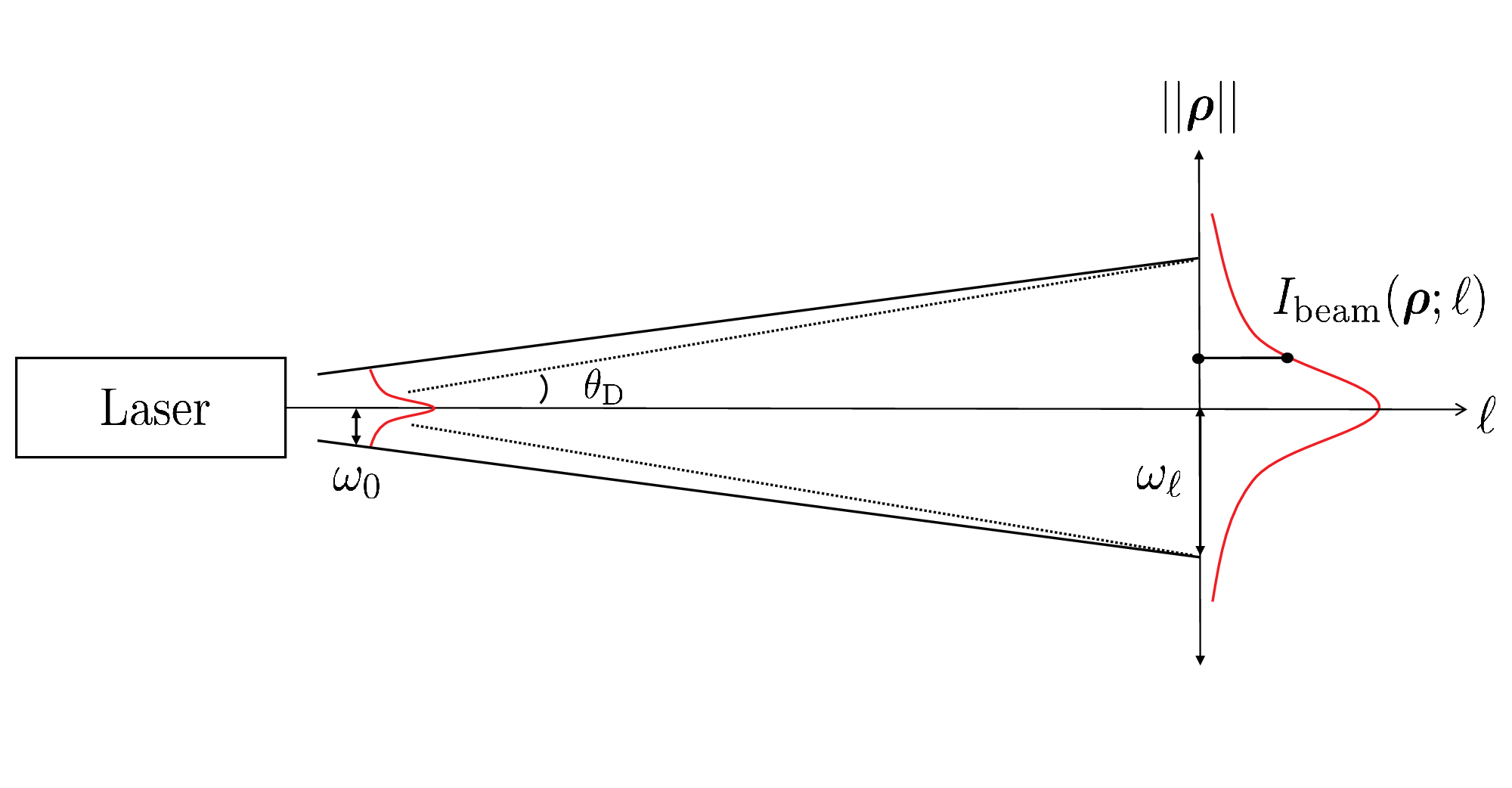}
    \caption{Beam waist at the transmitter and divergence of the beam along the propagation axis with Gaussian beam model.}
    \label{fig:Gaussian beam}
\end{figure} 
With the Gaussian beam model, the collected power at the receiver in the presence of a pointing error $\br$ is given by
\begin{equation} \label{eq: beamgain}
    h(\br;\ell) = \int_\cA I_{\rm beam}(\boldsymbol{\rho} - \br; \ell) d\boldsymbol{\rho},
\end{equation}
where $\cA$ denotes the antenna area. Note that $h(\cdot)$ represents the channel gain in the context of inter-satellite FSO communication.
Considering an antenna area $\cA$ with a circular aperture of radius $a$, and defining $r = \|\br\|_2$, we can approximate the above integration as~\cite{farid:2007}
\begin{equation} 
    h(r;\ell) \approx A_0 \exp \left(- \frac{2r^2}{\omega_{\ell_{\rm eq}}^2} \right), \label{ep:h_approx}
\end{equation}
where 
\begin{equation*}
    A_0 = [\mathrm{erf}(v)]^2,~ 
    \omega_{\ell_{\rm eq}}^2 = \omega_\ell^2 \frac{\sqrt{\pi} \, \mathrm{erf}(v)}{2v \exp(-v^2)},
    v = \frac{\sqrt{\pi} a}{\sqrt{2} \omega_\ell}.
\end{equation*}
The function $\mathrm{erf}(\cdot)$ refers to the Gauss error function.
Then, for a given FSO channel gain between the satellites, the signal-to-noise ratio (SNR) of the inter-satellite FSO link is given by
\begin{equation}
    \mathrm{SNR}(h) = \frac{h^2 R^2 P_t^2}{\sigma_n^2},
\end{equation}
where $R$ is the photoelectric responsivity of the receiver, $P_t$ represents the transmit power, and $\sigma_n^2$ is the noise variance.

To evaluate the performance of the inter-satellite FSO channel, we adopt the outage probability as the performance metric. The outage probability at a target transmission rate $R_0$ can be written in terms of the channel gain $h$ as
{\begin{equation}
    P_{\mathrm{out}}(R_0) = \mathrm{Prob} \left( \mathcal{C}(\mathrm{SNR}(h)) < R_0 \right),
\end{equation}
where $\mathcal{C}(\cdot) = \log_2(1+\cdot)$ denotes the instantaneous capacity.
The above expression can be equivalently rewritten in terms of the channel gain $h$ as}
\begin{equation} \label{eq: outage_h}
    P_{\mathrm{out}}(R_0) = {\mathrm{Prob}} ( h < \mathrm{SNR}^{-1} \left( \mathcal{C}^{-1}(R_0) \right) ),
\end{equation}
where $\mathrm{SNR}^{-1}(\cdot)$ and $\mathcal{C}^{-1}(\cdot)$ are the inverse functions of the SNR and capacity expressions, respectively.
To compute the outage probability in \eqref{eq: outage_h}, the CDF of the channel gain $h$ is required. In Section~\ref{sec3}, we thus derive the CDF of inter-satellite FSO channel by modeling the randomness of the pointing error $r$.

\section{Pointing Error Channel Model} \label{sec3}

 \begin{figure}
    \centering
    \includegraphics[width=1\linewidth]{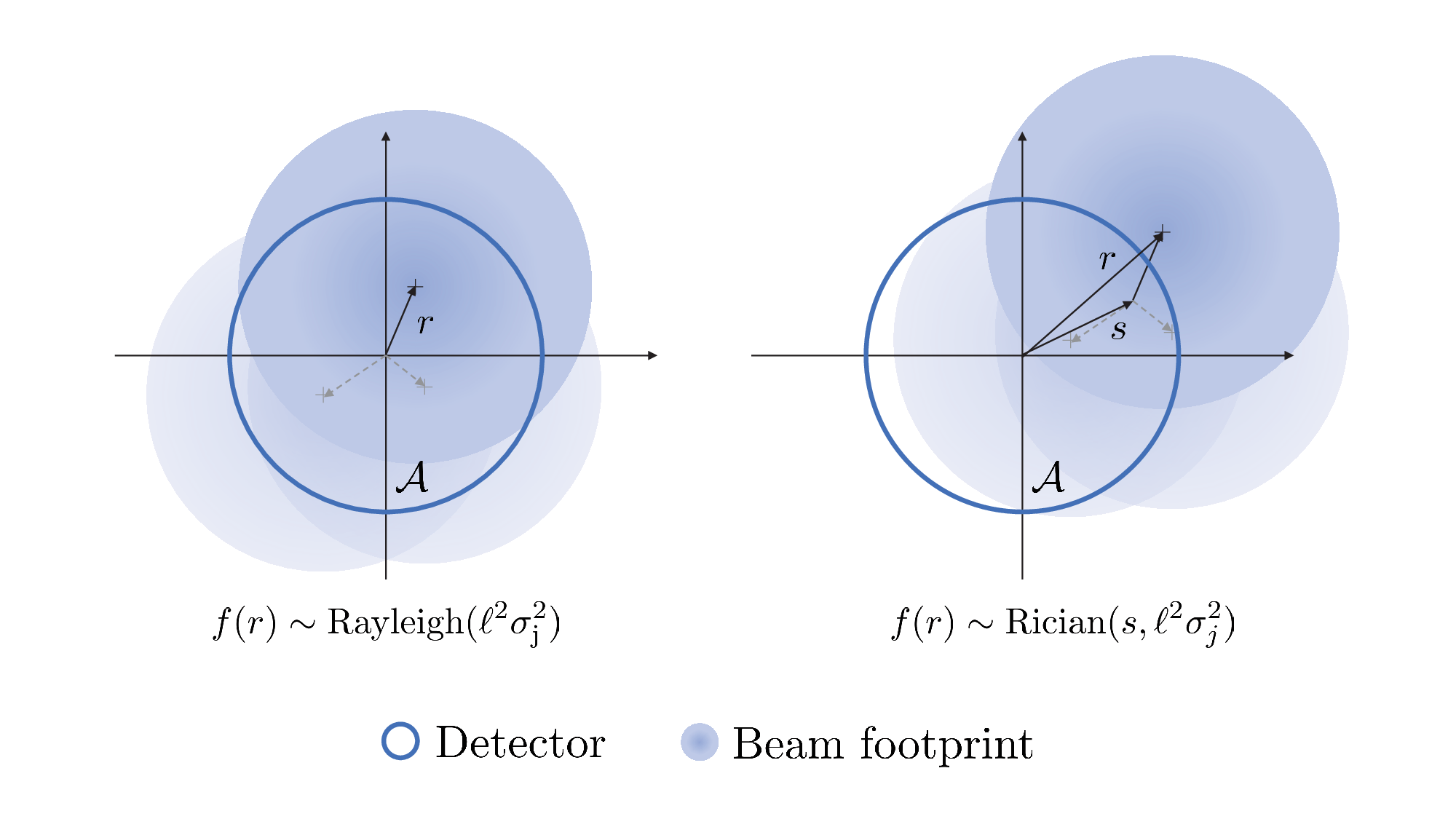}
    \caption{{Jitter-induced pointing error (left) and joint jitter and misalignment-induced pointing error (right).}}
    \label{fig:pointing error compare}
\end{figure}

To describe practical inter-satellite FSO communication systems, we consider a general pointing error expression that is jointly affected by jitter and misalignment. 
In inter-satellite FSO links, although mechanical disturbances may occur at both the transmitter and receiver satellites, fluctuations at the receiver side typically induce negligible impact, provided the optical beam remains within the field of view of the receive aperture~\cite{badas2024opto}. In contrast, transmitter-side jitter directly perturbs the beam propagation angle. Due to the high directivity and narrow divergence of inter-satellite optical beams, such angular deviations at the transmitter can result in significant lateral displacement at the receive plane.

Let \( \theta_j \) denote the angular deviation caused by transmitter jitter. Then, the corresponding jitter-induced lateral offset at a propagation distance \( \ell \) is given by
\begin{equation}
    r_j = \ell \tan(\theta_j),
\end{equation}
which simplifies under the small-angle approximation to
\begin{equation}
    r_j \approx \ell \theta_j.
\end{equation}
Assuming symmetry in the mechanical jitter dynamics of the satellite platform, the variances along horizontal and vertical axes are equal~\cite{gudimetla2007moment}.
As the jitter introduces a zero-mean random perturbation, the jitter variance at the receiver plane with propagation distance \(\ell\) becomes
\begin{equation}
    \mathrm{Var}(r_j) = \ell^2 \sigma_j^2,
\end{equation}
where \(\sigma_{j}\) denotes the angular deviation of \( \theta_j \) due to jitter.

In addition to random jitter, static misalignment may occur due to imperfect tracking or limited compensation of PAA. This results in a non-zero displacement between the beam axis and the center of the receive aperture. The combined effect of random jitter and misalignment manifests as a radial pointing error, whose distribution at distance 
\(\ell\) follows a Rician distribution~\cite{chen:1989pointingerror,arnon2005ppointingerror}
\begin{equation} \label{eq:r}
    f_r(r;\ell,\sigma_j,s) = \frac{r}{\ell^2\sigma_j^2} e^{-\frac{r^2 + s^2}{2\ell^2\sigma_j^2}} I_0\left(\frac{rs}{\ell^2\sigma_j^2}\right), \quad r \ge 0,
\end{equation}
where $s$ is the displacement due to misalignment and $I_0(\cdot)$ is the modified Bessel function of the first kind. A visual representation of the radial pointing displacement caused by jitter and its combination with static misalignment can be found in Fig.~\ref{fig:pointing error compare}, offering an intuitive understanding of the error distribution at the receive plane.

\begin{lemma}
For a given displacement value $s$, the PDF of the corresponding FSO channel gain $h$ can be obtained by combining the beam power collection model in~\eqref{ep:h_approx} with the Rician-distributed pointing error in~\eqref{eq:r}. This yields the following PDF expression
\begin{equation} \label{eq:pdf}
    f_h (h ) = \frac{\gamma^2}{A_0^{\gamma^2}}  e^{- \frac{s^2}{2\ell^2\sigma_j^2} } h^{\gamma^2 -1} I_0\left(\frac{s}{\ell^2\sigma_{j}^2} \sqrt{-\frac{\omega_{\ell_{\rm eq}}^2}{2} \ln{\frac{h}{A_0}}} \right),
\end{equation}
where the valid domain is $h \in [0, A_0]$.
\end{lemma}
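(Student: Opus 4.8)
The plan is to obtain $f_h$ by a single change of variables, since \eqref{ep:h_approx} makes $h$ a deterministic, strictly monotone function of the radial pointing error $r$ on $r \ge 0$. Because $h(r;\ell) = A_0 \exp(-2r^2/\omega_{\ell_{\rm eq}}^2)$ is strictly decreasing in $r$, the map $r \mapsto h$ is a bijection from $[0,\infty)$ onto $(0, A_0]$, which also pins down the stated domain $h \in [0,A_0]$. This monotonicity is what allows the use of the single-branch transformation formula.

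First I would invert \eqref{ep:h_approx} to write
\begin{equation*}
    r(h) = \sqrt{-\frac{\omega_{\ell_{\rm eq}}^2}{2}\ln\frac{h}{A_0}},
\end{equation*}
and then differentiate to get the Jacobian. From $\frac{dh}{dr} = -\frac{4r}{\omega_{\ell_{\rm eq}}^2}h$ one obtains $\left|\frac{dr}{dh}\right| = \frac{\omega_{\ell_{\rm eq}}^2}{4rh}$. Next I would apply the transformation rule $f_h(h) = f_r(r(h))\,\left|\frac{dr}{dh}\right|$ with the Rician density \eqref{eq:r}. The key simplification is that the factor $r$ in the numerator of $f_r$ cancels exactly against the $1/r$ in the Jacobian, leaving
\begin{equation*}
    f_h(h) = \frac{\omega_{\ell_{\rm eq}}^2}{4\ell^2\sigma_j^2 h}\, e^{-\frac{r^2+s^2}{2\ell^2\sigma_j^2}}\, I_0\!\left(\frac{r s}{\ell^2\sigma_j^2}\right),
\end{equation*}
with $r = r(h)$.

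Substituting $r^2 = -\frac{\omega_{\ell_{\rm eq}}^2}{2}\ln(h/A_0)$ into the exponential converts the Gaussian factor into a power of $h$, namely $\exp\!\left(-\frac{r^2}{2\ell^2\sigma_j^2}\right) = (h/A_0)^{\gamma^2}$, where I would introduce the shorthand $\gamma^2 = \frac{\omega_{\ell_{\rm eq}}^2}{4\ell^2\sigma_j^2}$. The same $\gamma^2$ collapses the prefactor $\frac{\omega_{\ell_{\rm eq}}^2}{4\ell^2\sigma_j^2 h}$ to $\gamma^2/h$, and inserting $r(h)$ into the Bessel argument yields exactly the stated $I_0\!\big(\frac{s}{\ell^2\sigma_j^2}\sqrt{-\frac{\omega_{\ell_{\rm eq}}^2}{2}\ln(h/A_0)}\big)$. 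Combining the $\gamma^2/h$ prefactor with the $h^{\gamma^2}$ and $A_0^{-\gamma^2}$ factors produces the $\frac{\gamma^2}{A_0^{\gamma^2}}h^{\gamma^2-1}$ form, recovering \eqref{eq:pdf}.

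This argument is essentially a routine transformation of a random variable, so I do not anticipate a genuine obstacle; the only points that require care are (i) checking monotonicity of the beam-gain map so that the single-branch change-of-variables formula is valid and the image is precisely $(0,A_0]$, and (ii) the algebraic bookkeeping that lets the Gaussian exponential in $r^2$ and the $1/h$ prefactor both be absorbed into the compact $\gamma^2$ notation. Verifying that the cancellation of $r$ and the logarithmic substitution reproduce the Bessel argument in \eqref{eq:pdf} is the most error-prone part, but it is mechanical once $\gamma^2$ is defined.
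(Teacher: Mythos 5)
Your proposal is correct and follows essentially the same route as the paper's Appendix~A: a single-branch change of variables using the inverse map $r(h)=\sqrt{-\tfrac{\omega_{\ell_{\rm eq}}^2}{2}\ln(h/A_0)}$, the Jacobian $\tfrac{\omega_{\ell_{\rm eq}}^2}{4rh}$ (which cancels the factor $r$ in the Rician density), and the shorthand $\gamma^2=\tfrac{\omega_{\ell_{\rm eq}}^2}{4\ell^2\sigma_j^2}$. Your explicit check of strict monotonicity of the beam-gain map, which the paper leaves implicit, is a minor but welcome addition.
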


\begin{proof}
    See Appendix~A for a detailed proof.
\end{proof}

To evaluate the outage probability, the CDF of $h$ is also required. Based on the PDF in \eqref{eq:pdf}, the CDF can be derived in a closed form as follows.

\begin{lemma}
Let $F_h(h)$ be the CDF of the FSO channel gain. Then, for $0 \le h \le A_0$, $F_h(h)$ is given by
\begin{align}
    F_h(h) = e^{- \frac{s^2}{2 \ell^2 \sigma_j^2}} \left \{ \sum_{n=0}^{\infty} \frac{ \left( \frac{s^2}{2 \ell^2 \sigma_j^2} \right)^n }{(n!)^2} \Gamma\left(n + 1, \gamma^2 \ln \frac{A_0}{h} \right) \right \},
\end{align}
where $\Gamma(n,x) = \int_{x}^\infty t^n e^{-t} dt$ is the upper incomplete gamma function. \end{lemma}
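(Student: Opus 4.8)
The plan is to compute $F_h(h)=\int_0^h f_h(u)\,du$ directly from the PDF in~\eqref{eq:pdf}, expanding the modified Bessel function as its defining power series $I_0(x)=\sum_{n=0}^\infty (x^2/4)^n/(n!)^2$ and then integrating term by term. Writing $\sigma^2:=\ell^2\sigma_j^2$ for brevity, I would first substitute the argument of $I_0$ from~\eqref{eq:pdf}; since that argument squared is proportional to $-\ln(u/A_0)$, using the defining relation $\gamma^2=\omega_{\ell_{\rm eq}}^2/(4\ell^2\sigma_j^2)$ the quantity $x^2/4$ becomes $\frac{s^2\gamma^2}{2\sigma^2}\bigl(-\ln\frac{u}{A_0}\bigr)$, so the $n$-th term of the series contributes a factor $\bigl(-\ln\frac{u}{A_0}\bigr)^n$ together with $\bigl(\frac{s^2\gamma^2}{2\sigma^2}\bigr)^n$. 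On $[0,h]\subseteq[0,A_0]$ every summand is nonnegative, so Tonelli's theorem justifies interchanging sum and integral, reducing the task to evaluating
\[
    J_n = \int_0^h u^{\gamma^2-1}\Bigl(-\ln\tfrac{u}{A_0}\Bigr)^n\,du
\]
for each $n$.

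The key computational step is to evaluate $J_n$ in closed form. I would apply the substitution $w=\ln(A_0/u)$, which maps $u\in(0,h]$ to $w\in[\ln(A_0/h),\infty)$ and turns the integrand into $A_0^{\gamma^2}\,w^n e^{-\gamma^2 w}$; a further rescaling $t=\gamma^2 w$ then produces exactly an upper incomplete gamma function,
\[
    J_n = \frac{A_0^{\gamma^2}}{\gamma^{2n+2}}\,\Gamma\!\left(n+1,\,\gamma^2\ln\tfrac{A_0}{h}\right),
\]
where I use $\Gamma(n+1,x)=\int_x^\infty t^n e^{-t}\,dt$.

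Substituting $J_n$ back into the series, the prefactor $\gamma^2/A_0^{\gamma^2}$ of the PDF cancels the $A_0^{\gamma^2}$ and combines with $\gamma^{-2n-2}$ and the $\gamma^{2n}$ carried by the Bessel expansion to leave a net factor of $1$; the surviving coefficient is precisely $\bigl(s^2/2\sigma^2\bigr)^n/(n!)^2$, which yields the claimed expression with $\sigma^2=\ell^2\sigma_j^2$. I expect the main obstacle to be bookkeeping rather than analysis: one must track the powers of $\gamma$ arising from three separate sources—the PDF normalization, the Bessel-series argument, and the rescaling inside $J_n$—and verify that they cancel so that only $\bigl(s^2/2\ell^2\sigma_j^2\bigr)^n$ remains under the sum. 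A secondary point worth a remark is that the resulting series is a genuine CDF: at $h=A_0$ one has $\ln(A_0/h)=0$ and $\Gamma(n+1,0)=n!$, so $F_h(A_0)=e^{-s^2/2\sigma^2}\sum_{n}\frac{(s^2/2\sigma^2)^n}{n!}=1$, confirming normalization and, together with $F_h(0)=0$ and monotonicity, the validity of the term-by-term integration.
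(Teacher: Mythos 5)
Your proposal is correct and takes essentially the same route as the paper's Appendix~B: expand $I_0$ in its power series, interchange sum and integral, and reduce each term to an upper incomplete gamma function via the substitution $w=\ln(A_0/u)$ followed by the rescaling $t=\gamma^2 w$, with the powers of $\gamma$ cancelling exactly as you describe. Your two additions---invoking Tonelli's theorem to justify the interchange (which the paper does silently) and the normalization check $F_h(A_0)=e^{-\nu}\sum_n \nu^n/n! = 1$---are welcome refinements but do not alter the argument.
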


\begin{proof}
    See Appendix~B for a detailed proof.
\end{proof}

Since the exact CDF involves an infinite series that is computationally intractable, we consider a truncated approximation given by
\begin{equation} \label{eq:truncated CDF}
    \hat{F}_h (h) = e^{- \nu } \left\{ \sum_{n=0}^{N-1}  \frac{\nu^n}{(n !)^2} \Gamma\left(n+1, \zeta\right) \right\},
\end{equation}
where \( \nu = {s^2}/{(2\ell^2\sigma_j^2)} \) and \( \zeta = \gamma^2 \ln \left( {A_0}/{h} \right) \).
To ensure the accuracy of the approximation, we {need to} determine the minimum integer $N$ for which the contribution of the remaining terms becomes negligible.

Due to the monotonic decreasing property of the upper incomplete gamma function with respect to its lower limit \( \zeta \), it holds that
\begin{equation}
\frac{\nu^n}{(n!)^2} \Gamma(n+1, \zeta) \le \frac{\nu^n}{(n!)^2} \Gamma(n+1, 0), \quad \text{for } \zeta \ge 0.
\end{equation}
We define the right-hand side as an upper bound
\begin{equation}
U_n = \frac{\nu^n}{(n!)^2} \Gamma(n+1, 0).
\end{equation}
If \( U_n < \epsilon \) for a small threshold \( \epsilon \), then the actual \( n \)-th term is also guaranteed to be smaller than \( \epsilon \). This provides a reliable criterion for determining the truncation index \( N \).
\begin{algorithm}[t]
\caption{Bisection-Based Truncation Index Search}
\label{alg:truncation_search}
\begin{algorithmic}[1]
    \State \textbf{Input:} $\nu$, initial bound $N_{\mathrm{init}}$, threshold $\epsilon$
    \State $N_{\min} \gets 1$, $N_{\max} \gets N_{\mathrm{init}}$
    \While{$N_{\max} - N_{\min} > 1$}
        \State $N \gets \lfloor (N_{\min} + N_{\max}) / 2 \rfloor$
        \State Compute $\ln(U_N) \gets N(1 + \ln \nu - \ln N) - \frac{1}{2} \ln(2\pi N)$
        \If{$\ln(U_N) < \ln \epsilon$}
            \State $N_{\max} \gets N$
        \Else
            \State $N_{\min} \gets N$
        \EndIf
    \EndWhile
    \State \textbf{Return:} $N \gets N_{\max}$
\end{algorithmic}
\end{algorithm}

At $\zeta = 0$, the upper incomplete gamma function reduces to the complete gamma function as 
\begin{equation}
\Gamma(n+1, 0) = \Gamma(n+1) = n!,
\end{equation} so the bound can be simplified to 
\begin{equation}
U_n = {\nu^n}/{n!}.
\end{equation}
To avoid numerical instability when dealing with large $n$, we take the logarithm of both sides as 
\begin{equation}
\ln(U_n) = -\ln(n!) + n \ln \nu.
\end{equation}
Additionally, by applying the Stirling’s approximation~\cite{abramowitz1972handbook} to the factorial term, we can obtain
\begin{equation}
    \ln(U_n) \approx  n \left( 1 + \ln\nu - \ln n \right) - \frac{1}{2} \ln(2\pi n).\label{eq:lnUn_final}
\end{equation}
\noindent To determine the minimum truncation index $N$ satisfying the condition $\ln(U_N) < \ln\epsilon$, we utilize the bisection method, which enables fast and reliable convergence without evaluating factorials directly. Once the smallest $N$ is found, the truncated CDF expression in~\eqref{eq:truncated CDF} can be computed accordingly.
{The entire procedure is summarized in Algorithm~\ref{alg:truncation_search}, where the bisection search achieves convergence in $\mathcal{O}(\log_2(1/\epsilon))$ iterations~\cite{bisection_complexity}, ensuring minimal computational overhead while maintaining numerical accuracy.}

\section{Displacement Computation} \label{sec4}

\begin{figure}[t]
    \centering    \includegraphics[width=1\linewidth]{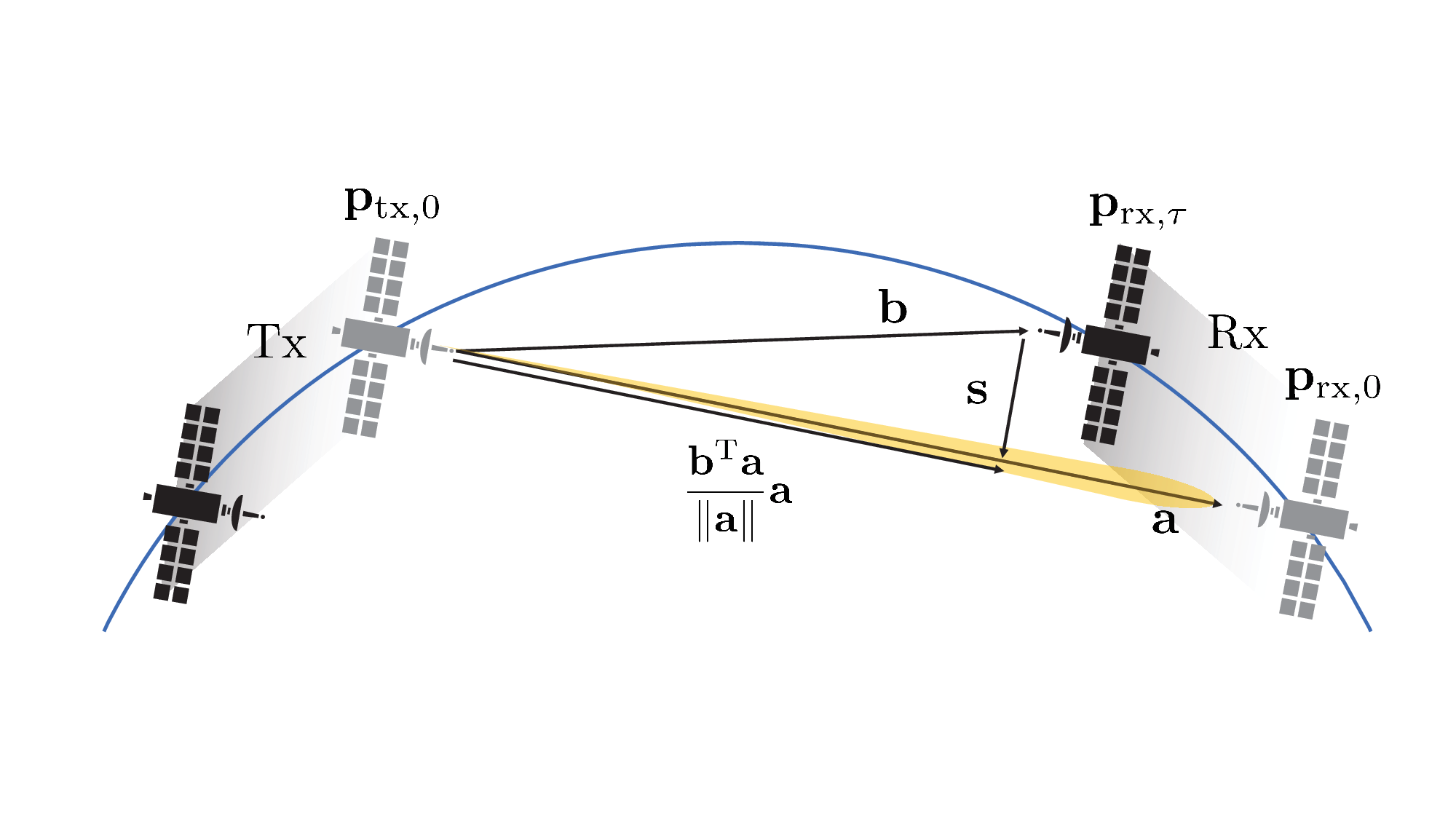}
    \caption{Misalignment due to movement of satellites.}
    \label{fig:displacement}
\end{figure}

In the previous section, we analyzed the FSO channel by assuming a given misalignment displacement $s$ and modeled the resulting pointing error accordingly.
However, in practice, determining a realistic value for $s$ is nontrivial, as it depends on various satellite dynamics and control imperfections. To make the analysis more practical, this section focuses on quantifying the misalignment component, considering a scenario where the transmitting satellite directs its optical signal toward the receiver satellite based on the their initial positional information. The given scenario is illustrated in Fig.~\ref{fig:displacement}. This allows us to characterize the worst beam displacement. 

Let the initial positions of the transmitting and receiving satellites be represented in spherical coordinates as
\begin{align}
\tilde{\bp}_{\mathrm{tx},0} &= (r_\mathrm{tx},\,\tilde{\theta}_\mathrm{tx},\,\tilde{\psi}_\mathrm{tx}), \\
\tilde{\bp}_{\mathrm{rx},0} &= (r_\mathrm{rx},\,\tilde{\theta}_\mathrm{rx},\,\tilde{\psi}_\mathrm{rx}),
\end{align}
where \( r \) denotes the radial distance from the center of Earth, and \( (\tilde{\theta}, \tilde{\psi}) \) are the polar and azimuthal angles measured in the Earth-centered inertial (ECI) coordinate frame.

Then, in our scenario of interest, the {optical} signal moves {from the initial transmitter position  $\tilde{\bp}_{\mathrm{tx},0}$ to the initial receiver satellite position $\tilde{\bp}_{\mathrm{rx},0}$.}
However, since the optical signal travels at finite speed, the receiver satellite continues to move along its orbit during the signal propagation time. 
Thus, at {arbitrary} time $\Delta t$ after beam transmission, the receiver's position becomes
\begin{equation}
\tilde{\bp}_{\mathrm{rx},\Delta t} = \left(r_\mathrm{rx},\,\tilde{\theta}_\mathrm{rx} + \dot{{\theta}}_{\mathrm{rx}} \Delta t,\,\tilde{\psi}_\mathrm{rx} + \dot{{\psi}}_{\mathrm{rx}} \Delta t\right),
\end{equation}
where $\dot{{\theta}}_{\mathrm{rx}}$ and $\dot{{\psi}}_{\mathrm{rx}}$ denote the polar and azimuthal angular velocities of receiver satellite in the ECI coordinate frame. Note that similar motion-induced displacement arises in terrestrial wireless networks. However, due to relatively low mobility and short transmission distances on the ground, the resulting impact is typically negligible. In contrast, high orbital velocities and long propagation distances in inter-satellite communication lead to more significant displacement during signal travel, making it a crucial factor in system modeling and performance analysis.

\begin{figure*}[b!]		
\hrule 
        \begin{equation}
			\label{eq: F2}
			F(\tau) = r_\mathrm{rx}^2+ r_\mathrm{tx}^2+
			2r_\mathrm{rx}r_\mathrm{tx} \left ( \sin (\theta_\mathrm{rx} + \omega_\mathrm{rx}\tau) \sin(\theta_\mathrm{tx})  \cos(\psi_\mathrm{rx}-\psi_\mathrm{tx}) +  \cos (\theta_\mathrm{rx} + \omega_\mathrm{rx}\tau)\cos(\theta_\mathrm{tx}) \right) - c^2 \tau^2. \tag{29}
		\end{equation}    
\end{figure*}

For notational simplicity, we rotate the coordinate system to a receiver-centric spherical coordinate frame. In this rotated frame, the angular motion of the receiver is consolidated into a single angular component, which is expressed as
\begin{equation}
\bp_{\mathrm{rx},\Delta t} = \left(r_\mathrm{rx},\,\theta_\mathrm{rx} + \omega_{\mathrm{rx}}\Delta t,\,\psi_\mathrm{rx}\right),
\end{equation}
where $(\theta, \psi)$ are the vertical and horizontal angles measured in the receiver-centric spherical coordinate frame, and $\omega_{\mathrm{rx}} = \sqrt{GM_\mathrm{E}/r_\mathrm{rx}^3}$ is the angular velocity determined by the satellite's altitude. The constants $G$ and $M_\mathrm{E}$ denote the gravitational constant and Earth's mass, respectively.
Note that the transmit satellite position also can be expressed as $\bp_{\mathrm{tx,0}}$ using the receiver-centric spherical reference frame.

\textbf{\textit{Remark:}} 
Although knowing the absolute coordinates of satellites in the ECI coordinate frame is critical for designing  entire satellite constellations and operating terrestrial-space communication networks, we adopt the receiver satellite-centered frame for this displacement computation. 
Considering the receiver satellite-centric coordinate system can be justified because the displacement solely depends on the movement of the receiver satellite, and the resulting displacement is independent of the selected reference frame.

To compute the propagation time $\tau$ required for the transmitted optical signal to reach the receiver satellite, we model the optical signal as a spherical wavefront expanding at the speed of light $c$ from the initial position of the transmitter. Expressing all satellite positions in Cartesian coordinates for mathematical convenience,  the arrival time $\tau$ must satisfy the following implicit equation
\begin{equation} \label{eq:tau_solve}
    (x_{\mathrm{rx},\tau} - x_{\mathrm{tx},0})^2 + (y_{\mathrm{rx},\tau} - y_{\mathrm{tx},0})^2 + (z_{\mathrm{rx},\tau} - z_{\mathrm{tx},0})^2 = c^2 \tau^2,
\end{equation}
which ensures that the receiver lies on the spherical wavefront at time $\tau$.
For mathematical convenience, all satellite positions are expressed in Cartesian coordinates. The position of satellite $i \in \{\mathrm{tx}, \mathrm{rx}\}$ at arbitrary time $\Delta t$ is given by
\begin{align}
    x_{i,\Delta t} &= r_{i} \sin(\theta_{i,\Delta t}) \cos(\psi_{i,\Delta t}), \label{eq:x} \\
    y_{i,\Delta t} &= r_{i} \sin(\theta_{i,\Delta t}) \sin(\psi_{i,\Delta t}), \label{eq:y}\\
    z_{i,\Delta t} &= r_{i} \cos(\theta_{i,\Delta t}).\label{eq:z}
\end{align}

\begin{algorithm}[t]
\caption{Bisection Method for Solving $F(\tau) = 0$}
\label{alg:bisection}
\begin{algorithmic}[1]
    \State \textbf{Input:} $F(\tau)$, parameters $r_{\mathrm{rx}}, r_{\mathrm{tx}}$, tolerance $\varepsilon$
    \State Set lower bound $\tau_+ \gets 0$
    \State Set upper bound $\tau_- \gets \sqrt{r_{\mathrm{rx}}^2 + r_{\mathrm{tx}}^2 + 4r_{\mathrm{rx}}r_{\mathrm{tx}}} \big/ c$
    \While{$\tau_- - \tau_+ > \varepsilon$}
        \State $\tau_m \gets (\tau_+ + \tau_-)/2$
        \If{$F(\tau_m) \cdot F(\tau_+) < 0$}
            \State $\tau_- \gets \tau_m$
        \Else
            \State $\tau_+ \gets \tau_m$
        \EndIf
    \EndWhile
    \State \textbf{Output:} Estimated solution $\tau \gets (\tau_+ + \tau_-)/2$
\end{algorithmic}
\end{algorithm}

\noindent {Using \eqref{eq:x}-\eqref{eq:z}, we can expand the implicit function in \eqref{eq:tau_solve} as in \eqref{eq: F2} at the bottom of this page and} obtain the numerical value of $\tau$ by finding certain instant that satisfies $F(\tau) = 0$. 
However, since $F(\tau)$ involves both trigonometric and polynomial terms, finding a closed-form solution for $\tau$ is generally intractable.
Instead, we first show that a feasible solution exists within a finite domain in the following theorem.

\begin{theorem} \label{thm: feasible solution}
A feasible solution \(\tau^*\) satisfying \(F(\tau^*) = 0\) always exists over the entire non-negative time domain \([0,\infty)\).
\end{theorem}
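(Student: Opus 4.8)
The plan is to prove existence by invoking the Intermediate Value Theorem on $F$ over a finite bracketing interval, since $F$ is manifestly continuous on $[0,\infty)$ (it is a composition of trigonometric and polynomial functions of $\tau$). The entire argument rests on splitting $F$ into a \emph{uniformly bounded} oscillatory part and the monotone parabolic term $-c^2\tau^2$, and then exhibiting a sign change.

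First I would isolate the angular factor
\begin{equation*}
C(\tau) = \sin(\theta_\mathrm{rx}+\omega_\mathrm{rx}\tau)\sin\theta_\mathrm{tx}\cos(\psi_\mathrm{rx}-\psi_\mathrm{tx}) + \cos(\theta_\mathrm{rx}+\omega_\mathrm{rx}\tau)\cos\theta_\mathrm{tx},
\end{equation*}
and observe that $C(\tau)$ is exactly the inner product of the two unit direction vectors pointing from the origin toward the receiver (at time $\tau$) and toward the transmitter. By Cauchy--Schwarz, $-1 \le C(\tau) \le 1$ for every $\tau$. Consequently the non-parabolic part of $F$, namely $g(\tau) := r_\mathrm{rx}^2 + r_\mathrm{tx}^2 + 2 r_\mathrm{rx} r_\mathrm{tx} C(\tau)$, is trapped in the band $(r_\mathrm{rx}-r_\mathrm{tx})^2 \le g(\tau) \le (r_\mathrm{rx}+r_\mathrm{tx})^2$ uniformly in $\tau$. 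This uniform bound is the crux: it is what guarantees the parabola eventually dominates.

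Next I would evaluate the endpoints of a bracketing interval. At $\tau = 0$ the parabolic term vanishes, so $F(0) = g(0) \ge (r_\mathrm{rx}-r_\mathrm{tx})^2 \ge 0$; this is just the squared initial separation and is nonnegative. At the upper bracket $\tau_- = \sqrt{r_\mathrm{rx}^2 + r_\mathrm{tx}^2 + 4 r_\mathrm{rx} r_\mathrm{tx}}\,/\,c$ used in Algorithm~\ref{alg:bisection}, the upper bound on $g$ gives
\begin{equation*}
F(\tau_-) \le (r_\mathrm{rx}+r_\mathrm{tx})^2 - \left(r_\mathrm{rx}^2 + r_\mathrm{tx}^2 + 4 r_\mathrm{rx} r_\mathrm{tx}\right) = -2 r_\mathrm{rx} r_\mathrm{tx} < 0.
\end{equation*}
Continuity of $F$ together with $F(0) \ge 0$ and $F(\tau_-) < 0$ then yields, by the Intermediate Value Theorem, a root $\tau^* \in [0, \tau_-] \subset [0,\infty)$.

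The step I expect to require the most care is the uniform estimate $|C(\tau)| \le 1$: one must establish it for all $\tau$ simultaneously rather than merely pointwise, since it is precisely this that forces $-c^2\tau^2$ to overtake the bounded remainder and produce the sign change. Identifying $C(\tau)$ with a unit-vector inner product makes this transparent, and it also explains why the explicit bracket $\tau_-$ in Algorithm~\ref{alg:bisection} is a legitimate upper limit. A minor point is that the endpoint inequality $F(0) \ge 0$ is non-strict, but this is harmless: if $F(0) = 0$ then $\tau^* = 0$ already solves $F(\tau^*) = 0$, and otherwise the sign change across $[0,\tau_-]$ is strict.
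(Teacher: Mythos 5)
Your proof is correct and takes essentially the same route as the paper: continuity of $F$ plus the intermediate value theorem on the bracket $[0,\tau_-]$ with the identical endpoints $\tau_+ = 0$ and $\tau_- = \sqrt{r_\mathrm{rx}^2 + r_\mathrm{tx}^2 + 4 r_\mathrm{rx} r_\mathrm{tx}}\,/\,c$. Your one refinement is identifying the angular factor $C(\tau)$ as a unit-vector inner product bounded by $1$ (the paper instead bounds the two angular terms by $2$ in total), which makes the strict inequality $F(\tau_-) \le -2 r_\mathrm{rx} r_\mathrm{tx} < 0$ immediate rather than merely asserted.
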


\begin{proof}
The function \(F(\tau)\) is continuous over the interval \([0,\infty)\), as it consists of trigonometric and polynomial terms. Thus, by the intermediate value theorem, a solution exists if two points in the interval yield opposite signs of \(F(\tau)\).

Let us consider \(\tau_{+} = 0\). At this point, we have
\setcounter{equation}{29}
\begin{align}
F(0) = 
2r_\mathrm{rx}r_\mathrm{tx} \bigl( &\sin(\theta_\mathrm{rx}) \sin(\theta_\mathrm{tx}) \cos(\psi_\mathrm{rx} - \psi_\mathrm{tx}) \notag \\
&+ \cos(\theta_\mathrm{rx}) \cos(\theta_\mathrm{tx}) \bigr) + r_\mathrm{rx}^2 + r_\mathrm{tx}^2.
\end{align}
Given that the elevation angles satisfy \(\theta_\mathrm{rx}, \theta_\mathrm{tx} \in [0,\pi]\), we know that 
{\(\sin(\theta_\mathrm{rx})\ge0 \) and \( \sin(\theta_\mathrm{tx}) \ge 0\)}. Hence, the term involving the cosine difference is bounded as
\begin{align}
- \sin(\theta_\mathrm{rx}) \sin(\theta_\mathrm{tx}) &\le \sin(\theta_\mathrm{rx}) \sin(\theta_\mathrm{tx}) \cos(\psi_\mathrm{rx} - \psi_\mathrm{tx}) \notag \\
&\le \sin(\theta_\mathrm{rx}) \sin(\theta_\mathrm{tx}).
\end{align}
From the result, we obtain the bounds for \(F(0)\) as
\begin{equation}
(r_{\mathrm{rx}} - r_{\mathrm{tx}})^2 \le F(0) \le (r_{\mathrm{rx}} + r_{\mathrm{tx}})^2,
\end{equation}
resulting in \(F(0) \ge 0\).

Next, we analyze the behavior of \(F(\tau)\) as \(\tau\) increases. Since the angular terms are bounded and their total contribution does not exceed 2, we can derive a conservative upper bound
\begin{equation}
F(\tau) \le r_{\mathrm{rx}}^2 + r_{\mathrm{tx}}^2 + 4r_{\mathrm{rx}}r_{\mathrm{tx}} - c^2\tau^2.
\end{equation}
Let us define \(\tau_{-} = \sqrt{r_{\mathrm{rx}}^2 + r_{\mathrm{tx}}^2 + 4r_{\mathrm{rx}}r_{\mathrm{tx}}}/c\). At this point, the inequality implies \(F(\tau_{-}) < 0\), regardless of the angular variables. Hence, it is satisfied that {\(F(\tau_{-}) <0\).}
From the fact that \(F(\tau)\) is continuous and satisfies \(F(\tau_{+}) \ge 0\), \(F(\tau_{-}) < 0\), we know there must exist some \(\tau^* \in [\tau_{+}, \tau_{-}]\) such that \(F(\tau^*) = 0\) by the intermediate value theorem. This concludes the proof.
\end{proof}
\noindent Although it is difficult to find a closed-form solution of $F(\tau) =0$ in general, Theorem \ref{thm: feasible solution} guarantees the existence of the solution $\tau^*$. Therefore, we approximate the solution numerically using the bisection method.
The entire procedure is summarized in Algorithm~\ref{alg:bisection}.

With the satellite position information at time $\tau$, we can now derive the displacement between the beam’s intended direction and the actual receiver location. Let $\ba = \bp_{\mathrm{rx},0} - \bp_{\mathrm{tx},0}$ denote the directional vector from the transmitter to the initially targeted receiver position, and let $\bb = \bp_{\mathrm{rx},\tau} - \bp_{\mathrm{tx},0}$ be the directional vector from the transmitter to the actual receiver position at time $\tau$. These vectors and the associated geometry are illustrated in Fig.~\ref{fig:displacement}.
Then, the displacement vector that can quantify the misalignment is obtained as 
\begin{equation}
\bs = \frac{\bb^\mathrm{T} \ba}{\Vert \ba \Vert^2}   \ba  - \bb.
\end{equation}
Here, the magnitude of this vector, i.e., $s = \Vert \bs \Vert$, represents the misalignment displacement induced by the relative motion of the receiver.

\begin{table}[t]
\small
\captionsetup{justification=centering, labelsep=newline, font={smaller,sc}}
\caption{Optical Transceiver and Link Parameters \cite{song2017impact, osborn2021adaptive}}
\label{tab:transceiver}
\centering
\begin{tabular}{lcl}
\hline
\textbf{Parameter} & \textbf{Symbol} & \textbf{Value} \\
\hline
Wavelength & $\lambda$ & $1550\,\mathrm{nm}$ \\
Receiver responsivity & $R$ & $0.87\,\mathrm{A/W}$ \\
Receiver aperture radius& $a$ & $20\,\mathrm{cm}$ \\
Jitter angular deviation & $\sigma_j$ & $8 \times10^{-6}\,\mathrm{rad}$ \\
Beam waist radius & $w_0$ & $1.25\times10^{-2} \,\mathrm{m}$ \\
Optical noise variance & $\sigma_n^2$ & $1.6\times10^{-14}\,\mathrm{A}^2$\\
Transmit power & $P_t$ & 28\,dBm \\
Target transmission rate & $R_0$ & $1\,\mathrm{Gbps}$ \\
\hline
\end{tabular}
\end{table}

\section{Simulation Results} \label{sec5}
In this section, we investigate the pointing error induced by joint jitter and beam misalignment in inter-satellite FSO communication systems, which is denoted by \textit{Beam misalignment}. For a baseline, we consider a perfect beam alignment scenario, denoted by  \textit{No misalignment}, where the FSO channel is only affected by jitter-induced pointing error. To accurately describe practical ISL configurations, we consider two representative Walker constellation models, Walker-Star for Iridium and  Walker-Delta for Starlink.
The Iridium constellation consists of 66 satellites deployed across 6 OPs at an altitude of $781\,\mathrm{km}$, with 11 satellites per plane. The average intra-plane and inter-plane inter-satellite spacings are approximately $4{,}085\,\mathrm{km}$ and $3{,}745\,\mathrm{km}$, respectively~\cite{anderson:2020iridium}.
In contrast, the Starlink Phase I constellation comprises 1,584 satellites distributed over 72 OPs at an altitude of $550\,\mathrm{km}$, with 22 satellites per plane. The corresponding intra-plane and inter-plane spacings are approximately $1{,}977\,\mathrm{km}$ and $604\,\mathrm{km}$, respectively~\cite{liang:2021phasing}.
Based on these constellation configurations, we simulate inter-satellite FSO links using a fixed set of transceiver parameters. Unless otherwise specified, the simulation parameters for the satellite optical transceiver follow the values in Table~\ref{tab:transceiver}.

\begin{figure}[t!]
    \centering
    \includegraphics[width=1\linewidth]{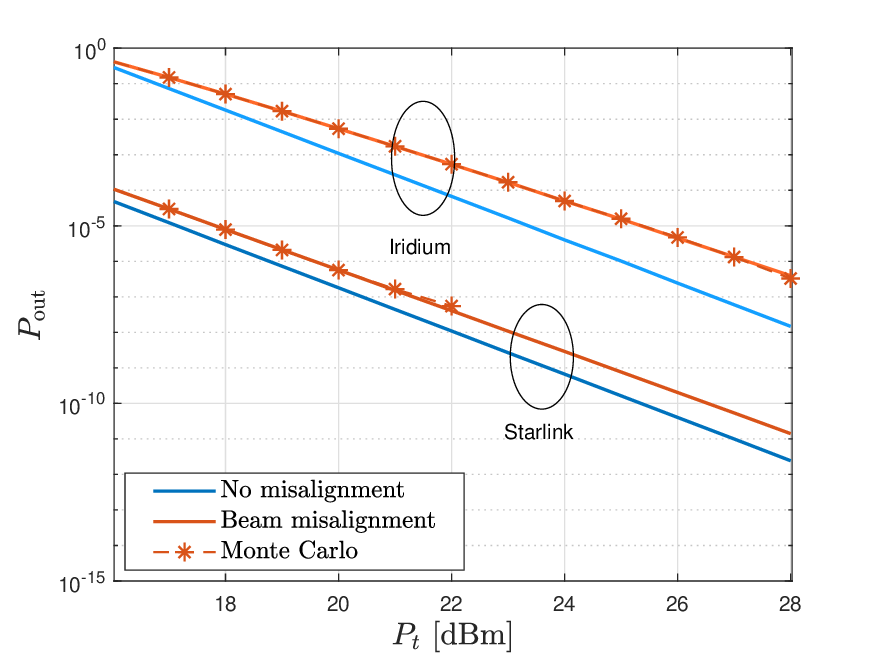}
    \caption{Outage probability according to $P_t$ in an intra-OP link.}
    \label{fig:intraOP}
\end{figure}
\begin{figure}[t!]
    \centering
\includegraphics[width=1\linewidth]{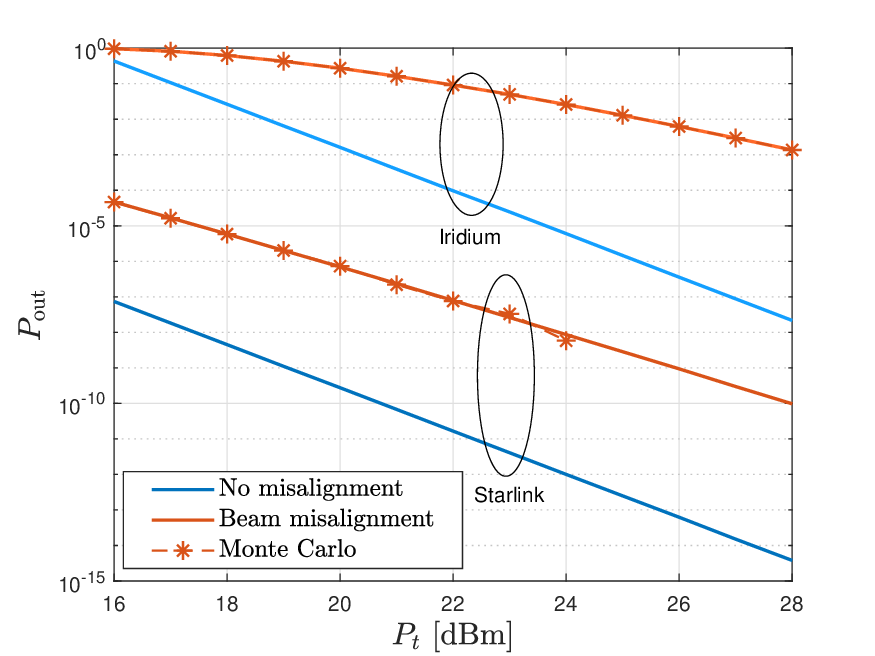}
    \caption{Outage probability according to $P_t$ in an inter-OP link.}
    \label{fig:interOP}
\end{figure}

Fig.~\ref{fig:intraOP} demonstrates the outage probability for FSO communication within the intra-OP at a target transmission rate of \(R_0 = 1 \,\mathrm{Gbps}\), comparing the Iridium and Starlink constellations. It is observed that beam misalignment results in noticeable performance degradation in both constellations, as reflected by the increased outage probability. Despite this degradation, the impact of misalignment can be mitigated by allocating additional transmit power. For instance, achieving an outage probability of \(10^{-8}\) requires a transmit power of \(P_t =  22\,\mathrm{dBm}\) with no misalignment case, whereas misalignment increases this requirement to \(P_t =  23\,\mathrm{dBm}\) in the Starlink case. When comparing Iridium and Starlink, the performance gap between with and without misalignment is more significant in Iridium. 
This can be explained by the configuration of Starlink, which places more satellites at lower orbital altitudes, resulting in less relative movement and therefore smaller misalignment.

To verify the effectiveness of the proposed algorithm, we also compare the truncated version of the CDF with results obtained from Monte Carlo simulations. Since the PDF in \eqref{eq:pdf} lacks a closed-form inverse, we employ rejection sampling for the Monte Carlo simulations, which is suitable for arbitrary distributions \cite{robert2004monte}. The results show that the outage probability computed using the proposed algorithm (solid line) closely matches that obtained from the Monte Carlo simulations (dashed line). The strong agreement between the analytical and simulation-based results validates the accuracy of the proposed approach, and also demonstrates that the truncated CDF expression derived in our algorithm can effectively capture the statistical behavior of the system without relying on exhaustive sampling.

In Fig.~\ref{fig:interOP}, we analyze the outage probability performance for inter-OP links in both the Iridium and Starlink constellations. Compared to intra-OP links, the impact of misalignment becomes more severe in inter-OP links for both constellations. This performance degradation results from the relative motion in inter-OP links, where the receiving satellite moves in a direction different from the beam propagation. Although the reduced ISL distance improves performance under no misalignment case, the advantage is diminished by the effect of increased misalignment. Therefore, robust FSO transceiver design is more critical for maintaining reliable communication in inter-OP links, even in densely populated mega-constellations such as Starlink.

\begin{figure}
    \centering
    \includegraphics[width=1\linewidth]{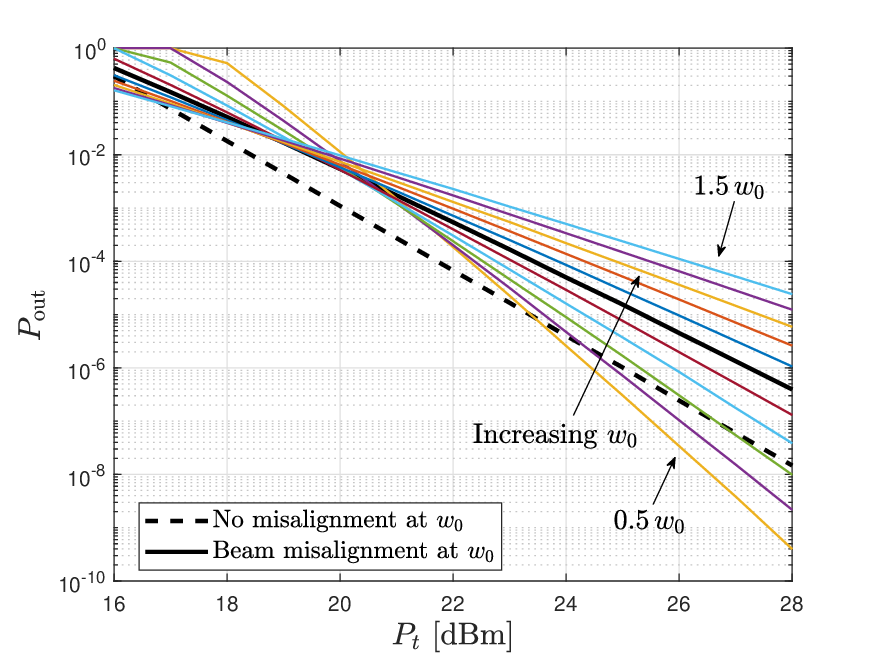}
    \caption{Outage probability according to $w_0$ in an Iridium intra-OP link.}
    \label{fig:w0}
\end{figure}

Fig.~\ref{fig:w0} shows how the outage probability varies with the transmit beam waist radius, ranging from $0.5\,w_0$ to $1.5\,w_0$, where $w_0 = 1.25 \times 10^{-2}\, \mathrm{m}$. According to the beam propagation model in~\eqref{eq:beam radius}, a smaller beam waist leads to stronger diffraction, resulting in a larger beam radius as the propagation distance increases. This beam divergence directly affects the link performance depending on the available transmit power. When the transmit power is sufficiently high, a wider beam can maintain adequate power density across a broader beam footprint, thereby reducing the outage probability. In contrast, under limited power conditions, distributing the beam over a wide area lowers the power density throughout the beam footprint, increasing the likelihood of outage. In such cases, confining the beam to a smaller area helps maintain sufficient received power. This observation is consistent with previous findings in~\cite{bloom2003understanding}, which emphasize the importance of controlling beam divergence to ensure reliable optical links under misalignment or platform motion.
An additional observation from the simulation results is that, with an appropriately chosen beam waist, better outage performance can be achieved under \textit{Beam misalignment} (solid line) than under \textit{No misalignment} with a fixed beam waist (dotted line). While using a smaller beam waist under \textit{No misalignment} also results in improved performance at higher transmit power levels, this observation suggests that transmitter-side beam waist optimization can mitigate the performance degradation caused by misalignment, potentially reducing the reliance on highly precise tracking systems.

\begin{figure}
    \centering
\includegraphics[width=1\linewidth]{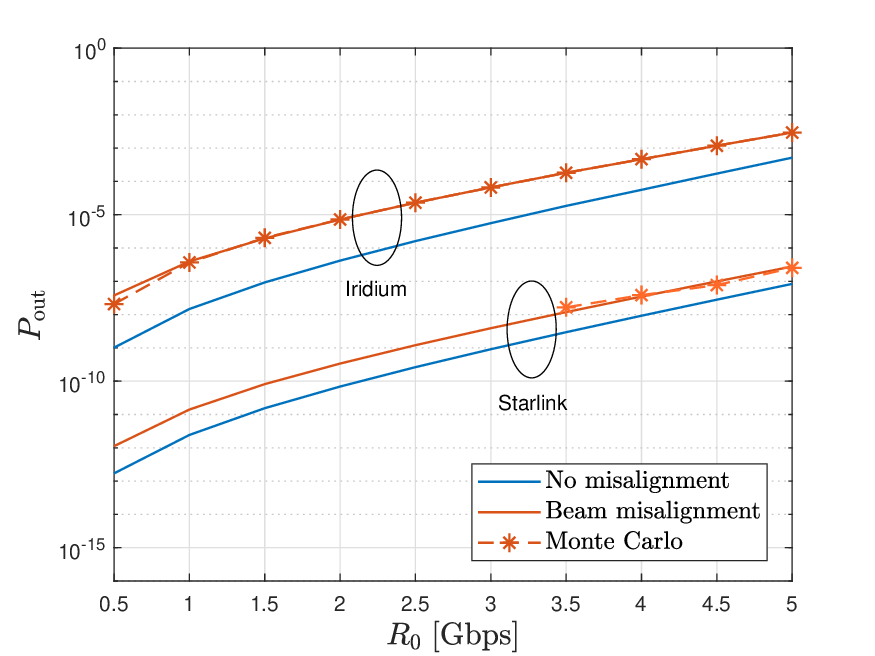}
    \caption{Outage probability according to $R_0$ in an intra-OP link.}
    \label{fig:R0_intra}
\end{figure}

\begin{figure}
    \centering
\includegraphics[width=1\linewidth]{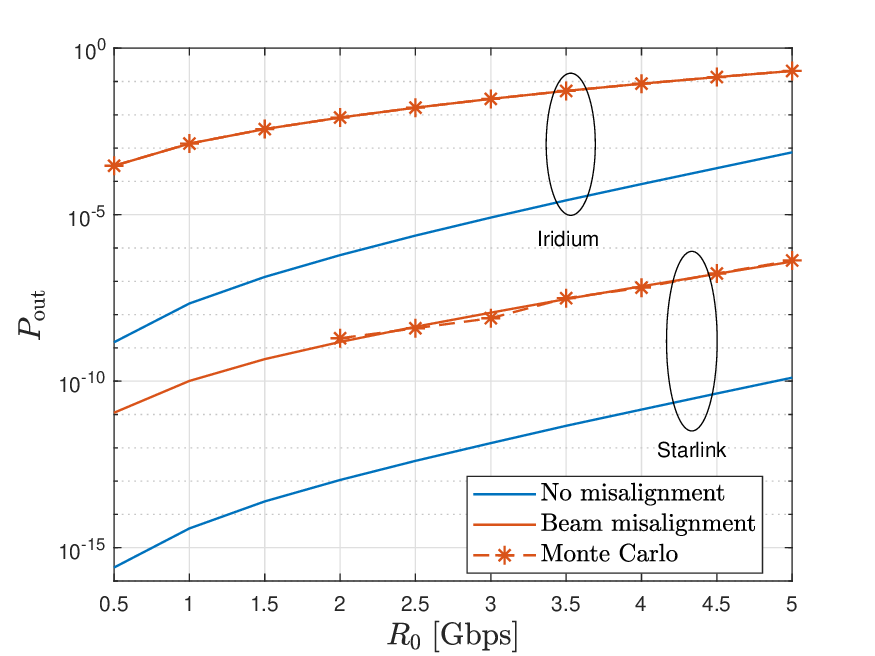}
    \caption{Outage probability according to $R_0$ in an inter-OP link.}
    \label{fig:R0_inter}
\end{figure}

Figs.~\ref{fig:R0_intra} and~\ref{fig:R0_inter} present the outage probability as a function of the target transmission rate $R_0$. An increase in the target rate leads to a higher required received power, which results in an elevated outage probability. As the rate threshold increases, the difference in performance between with and without misalignment becomes smaller. The diminishing gap can be attributed to the reduced impact of the tail probabilities in both Rayleigh and Rician fading distributions at high thresholds. The Starlink constellation, characterized by its dense orbital structure and shorter inter-satellite link distances, maintains relatively low outage levels even under high target rate requirements. In contrast, the Iridium system shows significant performance degradation as the target rate increases, yielding much higher outage probabilities. The inter-OP scenario demonstrates a greater performance disparity between with and without misalignment cases, consistent with previously observed results in Figs.~\ref{fig:intraOP} and~\ref{fig:interOP}.

\begin{figure}
    \centering
\includegraphics[width=1\linewidth]{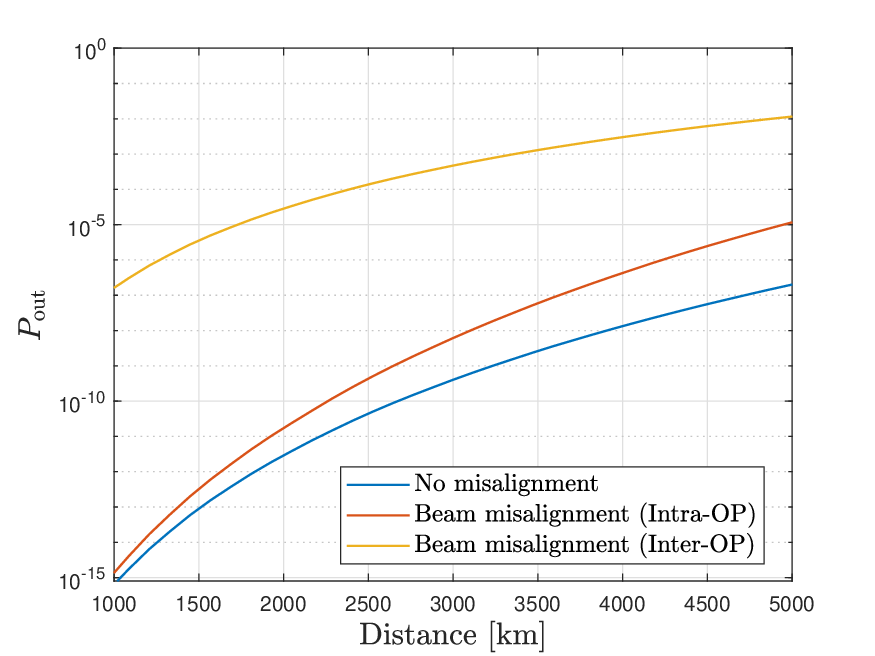}
    \caption{Outage probability according to ISL distance between Tx satellite and Rx satellite with $P_\mathrm{t} = 28 \,\mathrm{dBm}$ and $R_0 = 1\, \mathrm{Gbps}$.}
    \label{fig:distance}
\end{figure}

In Fig.~\ref{fig:distance}, the outage probability performance is analyzed with respect to the ISL distance. The analysis considers only one transmit satellite and one receive satellite at an altitude of $550\,\mathrm{km}$, ignoring the full constellation. In the case of inter-OP links, the satellites are assumed to be located on co-phased orbital planes, meaning that both satellites maintain phase alignment along their respective orbits.
Although previous analysis of inter-satellite FSO links in~\cite{song2017impact} has demonstrated link distance-dependent results, they ignored the relative motions of the satellites. However, under practical conditions where such ideal alignment cannot be guaranteed, the orbital configuration and relative motion of the satellites introduce additional performance variations. Differences between the pointing direction of the transmitted beam and the trajectory of the receiving satellite become more significant in the inter-OP configuration. For intra-OP links, the directional mismatch remains minimal, resulting in a relatively small deviation from the ideal alignment case. In contrast, inter-OP links exhibit greater displacement between the beam center and the receiver due to rapid cross-plane motion, leading to a significantly increased outage probability.

Fig.~\ref{fig:bar_graph} presents the number of satellites and OPs required to meet a given outage probability under beam misalignment conditions. As the target outage probability becomes more stringent, a denser constellation is needed to ensure sufficient spatial coverage and link reliability. Notably, a decrease in the acceptable outage probability leads to a significant increase in the number of orbital planes, while the number of satellites per plane remains relatively stable. Based on the results, it becomes possible to estimate the total number of satellites required for a given performance requirement, offering insight into the rationale behind deploying a large number of OPs, as seen in the 72-plane configuration of Starlink.
The simulation results show that a large satellite constellation with numerous satellites and OPs can provide a robust communication framework, with the impact of beam misalignment being effectively minimized. Thus, constellation design can play an important role in ensuring the reliability of the communication system.

\begin{figure}
    \centering
\includegraphics[width=1\linewidth]{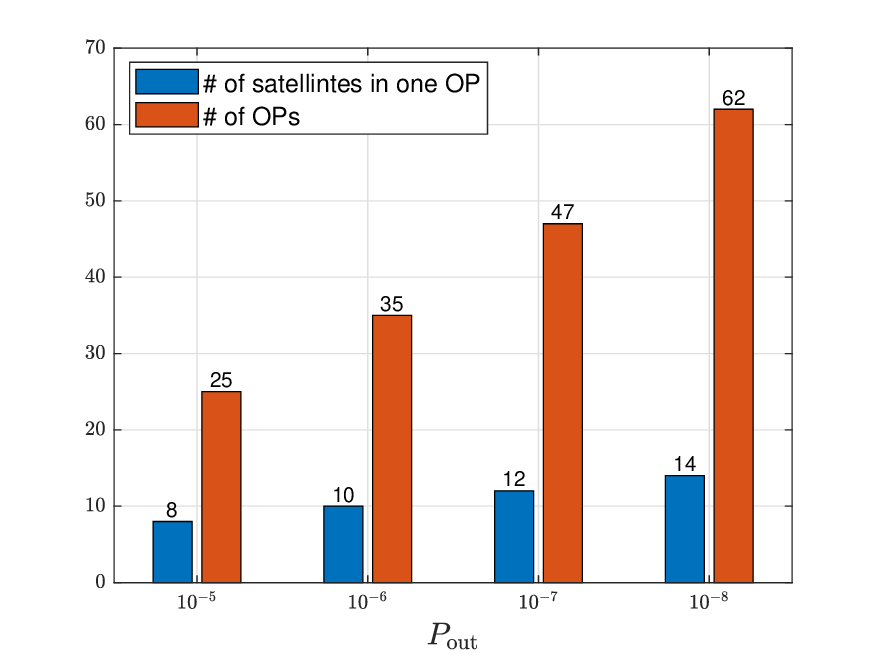}
    \caption{Required numbers of satellites and OPs to achieve given outage probability level at an altitude of $550\, \mathrm{km}$.}
    \label{fig:bar_graph}
\end{figure}

\section{Conclusion} \label{sec6}
In this paper, we analyzed the channel characteristics of inter-satellite FSO communication under the joint jitter and misalignment-induced pointing error. We derived a closed-form expression for the CDF of inter-satellite FSO channel and proposed a truncated CDF formulation to efficiently compute outage probabilities with low complexity. We introduced a bisection algorithm to efficiently determine the optimal truncation index. To reflect the impact of satellite dynamics, we quantified the displacement caused by relative motion under a worst-case scenario, which served as a lower bound for performance evaluation. Numerical results verified that the proposed truncated CDF closely matched with the Monte Carlo simulations and showed that beam misalignment leads to noticeable performance degradation compared to no misalignment case, especially in inter-OP links. The analysis presented in this study provides valuable guidance for designing satellite networks, helping to determine the required hardware performance levels, as well as the numbers of satellites and OPs for a satellite constellation design.

\section*{Appendix A\\ Proof of Lemma~I}

To derive the PDF of the FSO channel gain \( h \), a transformation is applied to the Rician-distributed radial displacement \( r \) given in \eqref{eq:r}. The transformation is defined by the channel gain function \( h = g(r) \), and its inverse is given by
\begin{equation}
    g^{-1}(h) = \sqrt{ -\frac{\omega_{\ell_{\rm eq}}^2}{2} \ln\left( \frac{h}{A_0} \right) }, \quad 0 \le h \le A_0,
\end{equation}
which maps a channel gain value to the corresponding radial displacement.
For convenience, we define
\begin{equation}
    k = \frac{\omega_{\ell_{\rm eq}}^2}{2} \ln\left( \frac{h}{A_0} \right),
\end{equation}
so that \( g^{-1}(h) = \sqrt{-k} \), and the transformation Jacobian becomes
\begin{equation}
    \left| \frac{d g^{-1}(h)}{dh} \right|
    = \left( \frac{\omega_{\ell_{\rm eq}}^2}{2h} \right)
       \frac{1}{2 \sqrt{-k} }.
\end{equation}
Using the change-of-variable technique for PDFs, the PDF of the FSO channel gain can be written as
\begin{align}
    f_h(h)
    &= f_r(g^{-1}(h)) \cdot \left| \frac{d g^{-1}(h)}{dh} \right| \notag \\
    &= \frac{ \sqrt{-k} }
             { \ell^2 \sigma_j^2 }
       e^{-\frac{-k + s^2}{2 \ell^2 \sigma_j^2}}
     I_0\left(
           \frac{s}{\ell^2 \sigma_j^2}\sqrt{-k}
       \right)
       \cdot \left( \frac{ \omega_{\ell_{\rm eq}}^2 }{ 2h } \right)
       \frac{1}{2 \sqrt{-k} }.
\end{align}
Simplifying the above expression, the terms involving \( \sqrt{-k} \) cancel, yielding
\begin{align}
    f_h(h)
    &= \frac{ \omega_{\ell_{\rm eq}}^2 }{ 4 \ell^2 \sigma_j^2 }
        e^{-\frac{s^2}{2 \ell^2 \sigma_j^2}}
        \frac{1}{h}
       \left( \frac{h}{A_0} \right)^{ \frac{ \omega_{\ell_{\rm eq}}^2 }{ 4 \ell^2 \sigma_j^2 } }
        \notag \\
    &\quad \cdot I_0\left(
           \frac{s}{\ell^2 \sigma_j^2}
            \sqrt{ -\frac{ \omega_{\ell_{\rm eq}}^2 }{2} \ln \left( \frac{h}{A_0} \right) }
       \right).
\end{align}
To further simplify the expression, introduce a dimensionless parameter
\begin{equation} \label{eq:gamma}
    \gamma = \frac{ \omega_{\ell_{\rm eq}} }{ 2 \ell \sigma_j },
\end{equation}
so that the PDF becomes
\begin{equation}
    f_h (h ) = \frac{\gamma^2}{A_0^{\gamma^2}}  e^{- \frac{s^2}{2\ell^2\sigma_j^2} } h^{\gamma^2 -1} I_0\left(\frac{s}{\ell^2\sigma_{j}^2} \sqrt{-\frac{\omega_{\ell_{\rm eq}}^2}{2} \ln{\frac{h}{A_0}}} \right),
\end{equation}
which provides the closed-form expression for the PDF of the FSO channel gain under jointly distributed pointing error and misalignment.

\section*{Appendix B\\ Proof of Lemma~2}

The CDF of the FSO channel gain \( h \) can be obtained by integrating its probability density function \( f_h(h) \), given in~\eqref{eq:pdf}, as follows
\begin{equation}
    F_h(x) = \int_0^x f_h(t)\,dt.
\end{equation}
Substituting the expression of the PDF yields
\begin{equation} \label{eq:cdf_initial}
    F_h(x) = A_0' \int_0^x t^{\gamma^2 -1} I_0\left( \frac{s}{\ell^2 \sigma_j^2} \sqrt{-\frac{\omega_{\ell_{\rm eq}}^2}{2} \ln \frac{t}{A_0}} \right) dt,
\end{equation}
where \( A_0' = \frac{\gamma^2}{A_0^{\gamma^2}} e^{ -\frac{s^2}{2 \ell^2 \sigma_j^2} } \), and \( I_0(\cdot) \) denotes the modified Bessel function of the first kind and order zero.

To enable analytical tractability, we apply the power series expansion of the Bessel function
\begin{equation}
    I_0(p) = \sum_{n=0}^\infty \frac{(p/2)^{2n}}{(n!)^2}.
\end{equation}
Applying this expansion to~\eqref{eq:cdf_initial} leads to
\begin{align}
    &F_h(x) \notag \\&= A_0' \int_0^x t^{\gamma^2 -1} \sum_{n=0}^{\infty} \frac{1}{(n!)^2} \left(  \frac{s}{2\ell^2 \sigma_j^2} \sqrt{ \frac{-\omega_{\ell_{\rm eq}}^2}{2} \ln \frac{t}{A_0} } \right)^{2n} dt \notag \\
    &= A_0' \sum_{n=0}^{\infty} \frac{1}{(n!)^2} \left( \frac{s^2 \omega_{\ell_{\rm eq}}^2}{8 \ell^4 \sigma_j^4} \right)^n \int_0^x t^{\gamma^2 -1} \left( \ln \frac{A_0}{t} \right)^n dt.
\end{align}
To simplify notation, we define the constant
\begin{equation} \label{eq:S_n}
    S_n = \frac{s^2 \omega_{\ell_{\rm eq}}^2}{8 \ell^4 \sigma_j^4},
\end{equation}
so that the expression becomes
\begin{equation}
    F_h(x) = A_0' \sum_{n=0}^{\infty} \frac{S_n^n}{(n!)^2} \int_0^x t^{\gamma^2 - 1} \left( \ln \frac{A_0}{t} \right)^n dt.
\end{equation}
Next, perform the change of variables \( u = \ln\left(A_0 / t\right) \), from which it follows that \( t = A_0 e^{-u} \) and \( dt = -A_0 e^{-u} du \). Then, the integral becomes
\begin{align}
    F_h(x) &= A_0' A_0^{\gamma^2} \sum_{n=0}^{\infty} \frac{S_n^n}{(n!)^2} \int_{\ln(A_0/x)}^\infty e^{-\gamma^2 u} u^n du.
\end{align}
By substituting \( \xi = \gamma^2 u \) and \( d\xi = \gamma^2 du \), we rewrite the expression as
\begin{align}
    F_h(x) = A_0' A_0^{\gamma^2} \sum_{n=0}^{\infty} \frac{S_n^n}{(n!)^2 \gamma^{2(n+1)}} \int_{\gamma^2 \ln(A_0/x)}^\infty e^{-\xi} \xi^n d\xi.
\end{align}
Finally, by using the previously defined parameter \( \gamma \) in~\eqref{eq:gamma}, and the definition of \( S_n \) in \eqref{eq:S_n}, the CDF expression simplifies to
\begin{equation}
    F_h(x) = e^{- \frac{s^2}{2 \ell^2 \sigma_j^2}} \sum_{n=0}^{\infty} \frac{ \left( \frac{s^2}{2 \ell^2 \sigma_j^2} \right)^n }{(n!)^2} \Gamma\left(n + 1, \gamma^2 \ln \frac{A_0}{x} \right),
\end{equation}
where \( \Gamma(\cdot\,, \cdot) \) denotes the upper incomplete Gamma function, defined as
\begin{equation}
    \Gamma(a, b) = \int_b^\infty t^{a-1} e^{-t} dt.
\end{equation}
This completes the derivation of the CDF of the FSO channel gain.

\bibliographystyle{IEEEtran}
\bibliography{FSO_references}

\begin{thebibliography}{10}
\providecommand{\url}[1]{#1}
\csname url@samestyle\endcsname
\providecommand{\newblock}{\relax}
\providecommand{\bibinfo}[2]{#2}
\providecommand{\BIBentrySTDinterwordspacing}{\spaceskip=0pt\relax}
\providecommand{\BIBentryALTinterwordstretchfactor}{4}
\providecommand{\BIBentryALTinterwordspacing}{\spaceskip=\fontdimen2\font plus
\BIBentryALTinterwordstretchfactor\fontdimen3\font minus \fontdimen4\font\relax}
\providecommand{\BIBforeignlanguage}[2]{{%
\expandafter\ifx\csname l@#1\endcsname\relax
\typeout{** WARNING: IEEEtran.bst: No hyphenation pattern has been}%
\typeout{** loaded for the language `#1'. Using the pattern for}%
\typeout{** the default language instead.}%
\else
\language=\csname l@#1\endcsname
\fi
#2}}
\providecommand{\BIBdecl}{\relax}
\BIBdecl

\bibitem{cioni2018satellite}
S.~Cioni, R.~De~Gaudenzi, O.~D.~R. Herrero, and N.~Girault, ``{On the satellite role in the era of 5G massive machine type communications},'' \emph{IEEE Netw.}, vol.~32, no.~5, pp. 54--61, Sept. 2018.

\bibitem{giordani2020non}
M.~Giordani and M.~Zorzi, ``{Non-terrestrial networks in the 6G era: Challenges and opportunities},'' \emph{IEEE Netw.}, vol.~35, no.~2, pp. 244--251, Mar. 2020.

\bibitem{jung2023satellite}
D.-H. Jung, G.~Im, J.-G. Ryu, S.~Park, H.~Yu, and J.~Choi, ``Satellite clustering for non-terrestrial networks: Concept, architectures, and applications,'' \emph{IEEE Veh. Technol. Mag.}, vol.~18, no.~3, pp. 29--37, Sept. 2023.

\bibitem{radhakrishnan2016_ISLintro2}
R.~Radhakrishnan, W.~W. Edmonson, F.~Afghah, R.~M. Rodriguez-Osorio, F.~Pinto, and S.~C. Burleigh, ``Survey of inter-satellite communication for small satellite systems: {Physical} layer to network layer view,'' \emph{IEEE Commun. Surveys Tuts.}, vol.~18, no.~4, pp. 2442--2473, 4th Quart. 2016.

\bibitem{chen2021_ISLintro1}
Q.~Chen, G.~Giambene, L.~Yang, C.~Fan, and X.~Chen, ``Analysis of inter-satellite link paths for {LEO} mega-constellation networks,'' \emph{IEEE Trans. Veh. Technol.}, vol.~70, no.~3, pp. 2743--2755, Mar. 2021.

\bibitem{kodheli2020satellite}
O.~Kodheli, E.~Lagunas, N.~Maturo, S.~K. Sharma, B.~Shankar, J.~F.~M. Montoya, J.~C.~M. Duncan, D.~Spano, S.~Chatzinotas, S.~Kisseleff \emph{et~al.}, ``{Satellite communications in the new space era: A survey and future challenges},'' \emph{IEEE Commun. Surveys Tuts.}, vol.~23, no.~1, pp. 70--109, 1st Quart. 2021.

\bibitem{kaushal:2016vacuum}
H.~Kaushal and G.~Kaddoum, ``Optical communication in space: {Challenges} and mitigation techniques,'' \emph{IEEE Commun. Surveys Tuts.}, vol.~19, no.~1, pp. 57--96, 1st Quart. 2017.

\bibitem{chen2024survey}
Q.~Chen, Z.~Guo, W.~Meng, S.~Han, C.~Li, and T.~Q. Quek, ``A survey on resource management in joint communication and computing-embedded {SAGIN},'' \emph{IEEE Commun. Surveys Tuts.}, vol.~27, no.~3, pp. 1911--1954, Jun. 2025.

\bibitem{trichili2020roadmap}
A.~Trichili, M.~A. Cox, B.~S. Ooi, and M.-S. Alouini, ``Roadmap to free space optics,'' \emph{J. Opt. Soc. Amer. B}, vol.~37, no.~11, pp. A184--A201, Nov. 2020.

\bibitem{chen:1989pointingerror}
C.-C. Chen and C.~S. Gardner, ``Impact of random pointing and tracking errors on the design of coherent and incoherent optical intersatellite communication links,'' \emph{IEEE Trans. Commun.}, vol.~37, no.~3, pp. 252--260, Mar. 1989.

\bibitem{calvo2019paa}
R.~M. Calvo, J.~Poliak, J.~Surof, A.~Reeves, M.~Richerzhagen, H.~F. Kelemu, R.~Barrios, C.~Carrizo, R.~Wolf, F.~Rein \emph{et~al.}, ``Optical technologies for very high throughput satellite communications,'' in \emph{Free-Space Laser Communications XXXI}, vol. 10910, Mar. 2019, pp. 189--204.

\bibitem{toyoshima:2005trends}
M.~Toyoshima, ``Trends in satellite communications and the role of optical free-space communications,'' \emph{J. Opt. Netw.}, vol.~4, no.~6, pp. 300--311, Jun. 2005.

\bibitem{yoon2017pointing}
H.~Yoon, ``Pointing system performance analysis for optical inter-satellite communication on cubesats,'' Ph.D. dissertation, Massachusetts Institute of Technology, Cambridge, MA, May 2017.

\bibitem{chaudhry:2021starlink}
A.~U. Chaudhry and H.~Yanikomeroglu, ``Laser intersatellite links in a starlink constellation: A classification and analysis,'' \emph{IEEE Veh. Technol. Magazine}, vol.~16, no.~2, pp. 48--56, Jun. 2021.

\bibitem{cahoy:2019click}
K.~Cahoy, M.~Long, A.~Sprague, and J.~Beery, ``The {CubeSat} laser infrared crosslink {(CLICK)} mission,'' in \emph{Proc. SPIE}, vol. 11180, Jul. 2019, pp. 358--369.

\bibitem{long:2018pointing}
M.~J. Long, ``Pointing acquisition and tracking design and analysis for {CubeSat} laser communication crosslinks,'' Master's thesis, Massachusetts Institute of Technology, 2018.

\bibitem{farid:2007}
A.~A. Farid and S.~Hranilovic, ``Outage capacity optimization for free-space optical links with pointing errors,'' \emph{J. Lightw. Technol.}, vol.~25, no.~7, pp. 1702--1710, Jul. 2007.

\bibitem{gappmair2011}
W.~Gappmair, S.~Hranilovic, and E.~Leitgeb, ``{OOK Performance for Terrestrial FSO Links in Turbulent Atmosphere with Pointing Errors Modeled by Hoyt Distributions},'' \emph{IEEE Commun. Lett.}, vol.~15, no.~8, pp. 875--877, Aug. 2011.

\bibitem{yang:2014nonzero}
F.~Yang, J.~Cheng, and T.~A. Tsiftsis, ``Free-space optical communication with nonzero boresight pointing errors,'' \emph{IEEE Trans. Commun.}, vol.~62, no.~2, pp. 713--725, Feb. 2014.

\bibitem{alquwaiee2016asymptotic}
H.~AlQuwaiee, H.-C. Yang, and M.-S. Alouini, ``On the asymptotic capacity of dual-aperture {FSO} systems with generalized pointing error model,'' \emph{IEEE Trans. Wireless Commun.}, vol.~15, no.~9, pp. 6502--6512, Sept. 2016.

\bibitem{nguyen2022rateadaptation}
T.~V. Nguyen, H.~D. Le, N.~T. Dang, and A.~T. Pham, ``On the design of rate adaptation for relay-assisted satellite hybrid {FSO}/{RF} systems,'' \emph{IEEE Photon. J.}, vol.~14, no.~1, pp. 1--13, Feb. 2022.

\bibitem{zedini2016dualhop}
E.~Zedini, H.~Soury, and M.-S. Alouini, ``On the performance analysis of dual-hop mixed {FSO}/{RF} systems,'' \emph{IEEE Trans. Wireless Commun.}, vol.~15, no.~5, pp. 3679--3694, May 2016.

\bibitem{soleimani2016generalized}
E.~Soleimani-Nasab and M.~Uysal, ``Generalized performance analysis of mixed {RF}/{FSO} cooperative systems,'' \emph{IEEE Trans. Wireless Commun.}, vol.~15, no.~1, pp. 714--727, Jan. 2016.

\bibitem{lei2018secrecy}
H.~Lei, Z.~Dai, K.-H. Park, W.~Lei, G.~Pan, and M.-S. Alouini, ``Secrecy outage analysis of mixed {RF}-{FSO} downlink {SWIPT} systems,'' \emph{IEEE Trans. Commun.}, vol.~66, no.~12, pp. 6384--6395, Dec. 2018.

\bibitem{dabiri2018uav}
M.~T. Dabiri, S.~M.~S. Sadough, and M.~A. Khalighi, ``Channel modeling and parameter optimization for hovering {UAV}-based free-space optical links,'' \emph{IEEE J. Sel. Areas Commun.}, vol.~36, no.~9, pp. 2104--2113, Sep. 2018.

\bibitem{wang2021hovering}
J.-Y. Wang, Y.~Ma, R.-R. Lu, J.-B. Wang, M.~Lin, and J.~Cheng, ``Hovering {UAV}-based {FSO} communications: channel modelling, performance analysis, and parameter optimization,'' \emph{IEEE J. Sel. Areas Commun.}, vol.~39, no.~10, pp. 3092--3107, Oct. 2021.

\bibitem{safi2020hap}
H.~Safi, A.~Dargahi, J.~Cheng, and M.~Safari, ``Analytical channel model and link design optimization for ground-to-{HAP} free-space optical communications,'' \emph{J. Lightw. Technol.}, vol.~38, no.~18, pp. 4866--4878, Sep. 2020.

\bibitem{bhatnagar2016mimo}
M.~R. Bhatnagar and Z.~Ghassemlooy, ``Performance analysis of gamma--gamma fading {FSO} {MIMO} links with pointing errors,'' \emph{J. Lightw. Technol.}, vol.~34, no.~9, pp. 2246--2256, May 2016.

\bibitem{nelson1994experimental}
R.~D. Nelson, T.~H. Ebben, and R.~G. Marshalek, ``Experimental verification of the pointing error distribution of an optical intersatellite link,'' in \emph{Proc. SPIE}, vol. 2123, Jul. 1994, pp. 65--74.

\bibitem{arnon2005ppointingerror}
S.~Arnon, ``Performance of a laser $\mu$ satellite network with an optical preamplifier,'' \emph{J. Opt. Soc. Amer. A}, vol.~22, no.~4, pp. 708--715, Apr. 2005.

\bibitem{polishuk2004optimization}
A.~Polishuk and S.~Arnon, ``Optimization of a laser satellite communication system with an optical preamplifier,'' \emph{Appl. Opt.}, vol.~43, no.~14, pp. 2985--2993, May 2004.

\bibitem{liang2023free}
J.~Liang, A.~U. Chaudhry, E.~Erdogan, H.~Yanikomeroglu, G.~K. Kurt, P.~Hu, K.~Ahmed, and S.~Martel, ``{Free-space optical (FSO) satellite networks performance analysis: Transmission power, latency, and outage probability},'' \emph{IEEE Open J. Veh. Technol.}, vol.~5, pp. 244--261, Dec. 2023.

\bibitem{tawfik2021performance}
M.~M. Tawfik, M.~F.~A. Sree, M.~Abaza, and H.~H.~M. Ghouz, ``Performance analysis and evaluation of inter-satellite optical wireless communication system {(IsOWC) from GEO to LEO at range 45000 km},'' \emph{IEEE Photon. J.}, vol.~13, no.~4, pp. 1--6, Aug. 2021.

\bibitem{song2017impact}
T.~Song, Q.~Wang, M.-W. Wu, T.~Ohtsuki, M.~Gurusamy, and P.-Y. Kam, ``Impact of pointing errors on the error performance of intersatellite laser communications,'' \emph{J. Lightw. Technol.}, vol.~35, no.~14, pp. 3082--3091, Jul. 2017.

\bibitem{zhu2024average}
Y.~Zhu, G.~Xu, M.~Gao, H.~Chu, and Z.~Song, ``Average bit-error rate analysis of an inter-satellite optical communication system under the effect of perturbations,'' \emph{Optics Express}, vol.~32, no.~21, pp. 36\,796--36\,810, Oct. 2024.

\bibitem{sharma:2013vacuum}
V.~Sharma and N.~Kumar, ``{Improved analysis of 2.5 Gbps-inter-satellite link (ISL) in inter-satellite optical-wireless communication (IsOWC) system},'' \emph{Opt. Commun.}, vol. 286, pp. 99--102, Jan. 2013.

\bibitem{saleh:2008}
B.~E. Saleh and M.~C. Teich, \emph{{Fundamentals of Photonics}}.\hskip 1em plus 0.5em minus 0.4em\relax New York: Wiley, 2008.

\bibitem{alda2003laser}
J.~Alda, ``Laser and gaussian beam propagation and transformation,'' \emph{Encyclopedia Opt. Eng.,}, pp. 999--1013, 2003.

\bibitem{badas2024opto}
M.~Bad{\'a}s, P.~Piron, J.~Bouwmeester, J.~Loicq, H.~Kuiper, and E.~Gill, ``Opto-thermo-mechanical phenomena in satellite free-space optical communications: survey and challenges,'' \emph{Opt. Eng.}, vol.~63, no.~4, pp. 041\,206--041\,206, Oct. 2024.

\bibitem{gudimetla2007moment}
V.~S.~R. Gudimetla and J.~F. Riker, ``Moment-matching method for extracting beam jitter and boresight in experiments with satellites of small physical cross section,'' \emph{Applied optics}, vol.~46, no.~23, pp. 5608--5616, Aug. 2007.

\bibitem{abramowitz1972handbook}
M.~Abramowitz and I.~A. Stegun, \emph{Handbook of Mathematical Functions with Formulas, Graphs, and Mathematical Tables}.\hskip 1em plus 0.5em minus 0.4em\relax New York: Dover Publications, 1972.

\bibitem{bisection_complexity}
R.~L. Burden and J.~D. Faires, \emph{Numerical Analysis}, 9th~ed.\hskip 1em plus 0.5em minus 0.4em\relax Boston, MA: Cengage Learning, Jan. 2010.

\bibitem{osborn2021adaptive}
J.~Osborn, M.~J. Townson, O.~J. Farley, A.~Reeves, and R.~M. Calvo, ``{Adaptive Optics pre-compensated laser uplink to LEO and GEO},'' \emph{Optics Express}, vol.~29, no.~4, pp. 6113--6132, Feb. 2021.

\bibitem{anderson:2020iridium}
B.~J. Anderson, R.~Angappan, A.~Barik, S.~K. Vines, S.~Stanley, P.~N. Bernasconi, H.~Korth, and R.~J. Barnes, ``Iridium communications satellite constellation data for study of {Earth's} magnetic field,'' \emph{Geochem. Geophys. Geosyst.}, vol.~22, no.~8, p. e2020GC009515, Dec. 2021.

\bibitem{liang:2021phasing}
J.~Liang, A.~U. Chaudhry, and H.~Yanikomeroglu, ``Phasing parameter analysis for satellite collision avoidance in starlink and kuiper constellations,'' in \emph{2021 IEEE 4th 5G world forum (5GWF)}, Oct. 2021, pp. 493--498.

\bibitem{robert2004monte}
C.~P. Robert and G.~Casella, \emph{Monte Carlo Statistical Methods}, 2nd~ed.\hskip 1em plus 0.5em minus 0.4em\relax New York: Springer, 2004.

\bibitem{bloom2003understanding}
S.~Bloom, E.~Korevaar, J.~Schuster, and H.~Willebrand, ``Understanding the performance of free-space optics,'' \emph{J. Opt. Netw.}, vol.~2, no.~6, pp. 178--200, Jun. 2003.

\end{thebibliography}

\end{document}